\documentclass[letterpaper,11pt]{article}

\usepackage{diagbox}
\usepackage{makecell}
\usepackage{booktabs}
\usepackage{tikz,tikz-3dplot}
\usepackage{amsmath}
\usepackage{bbm}
\usepackage{amsfonts}
\usepackage{amssymb}
\usepackage{amsthm}
\usepackage{url,ifthen}
\usepackage[shortlabels]{enumitem}
\usepackage{srcltx}
\usepackage{dsfont}
\usepackage{multirow}
\usepackage{boxedminipage}
\usepackage[margin=1.1in]{geometry}
\usepackage{nicefrac}
\usepackage{xspace}
\usepackage{graphicx}
\usepackage{color}
\usepackage{subfigure}
\usepackage{colortbl}
\usepackage{setspace}
\usepackage{pgfplots}
\pgfplotsset{compat=1.12}
\usepackage{algorithm,algorithmic}
\usepackage[algo2e]{algorithm2e}
\RestyleAlgo{ruled}

\usepackage{xcolor}
\definecolor{DarkGreen}{rgb}{0.1,0.5,0.1}
\definecolor{DarkRed}{rgb}{0.5,0.1,0.1}
\definecolor{DarkBlue}{rgb}{0.1,0.1,0.5}
\definecolor{Gray}{rgb}{0.2,0.2,0.2}

\definecolor{c1}{RGB}{38, 70, 83}
\definecolor{c2}{RGB}{42, 157, 143}
\definecolor{c3}{RGB}{233, 196, 106}
\definecolor{c5}{RGB}{231, 111, 81}
\definecolor{c4}{RGB}{244, 162, 97}

\definecolor{c1}{RGB}{38, 70, 83}
\definecolor{c2}{RGB}{42, 157, 143}
\definecolor{c3}{RGB}{233, 196, 106}
\definecolor{c5}{RGB}{231, 111, 81}
\definecolor{c4}{RGB}{244, 162, 97}

\usepackage{listings}
\lstdefinestyle{mystyle}{
    commentstyle=\color{DarkBlue},
    keywordstyle=\color{DarkRed},
    numberstyle=\tiny\color{Gray},
    stringstyle=\color{DarkGreen},
    basicstyle=\footnotesize,
    breakatwhitespace=false,         
    breaklines=true,                 
    captionpos=b,                    
    keepspaces=true,                 
    numbers=left,                    
    numbersep=5pt,                  
    showspaces=false,                
    showstringspaces=false,
    showtabs=false,                  
    tabsize=2
}
\usepackage[small]{caption}
\usepackage[pdftex]{hyperref}
\hypersetup{
    unicode=false,          
    pdftoolbar=true,        
    pdfmenubar=true,        
    pdffitwindow=false,      
    pdfnewwindow=true,      
    colorlinks=true,       
    linkcolor=DarkBlue,          
    citecolor=DarkGreen,        
    filecolor=DarkRed,      
    urlcolor=DarkBlue,          
    pdftitle={},
    pdfauthor={},
}

\def\draft{1}

\def\submit{0}

\ifnum\draft=1 
    \def\ShowAuthNotes{1}
\else
    \def\ShowAuthNotes{0}
\fi

\ifnum\submit=1
\newcommand{\forsubmit}[1]{#1}
\newcommand{\forreals}[1]{}
\else
\newcommand{\forreals}[1]{#1}
\newcommand{\forsubmit}[1]{}
\fi

\ifnum\ShowAuthNotes=1
\newcommand{\authnote}[2]{{ \footnotesize \bf{\color{DarkRed}[#1's Note:
{\color{DarkBlue}#2}]}}}
\else
\newcommand{\authnote}[2]{}
\fi

\newtheorem{theorem}{Theorem}[section]

\newtheorem{corollary}[theorem]{Corollary}

\newtheorem{fact}[theorem]{Fact}

\newtheorem{definition}[theorem]{Definition}

\newtheorem*{definition*}{Definition}
\newtheorem*{proposition*}{Proposition}

\theoremstyle{definition}
\newtheorem*{example*}{Example}
\newtheorem{example}[theorem]{Example}

\newtheoremstyle{example_contd}
{\topsep} {\topsep}%
{}
{}
{\bfseries}
{.}
{1em}
{\thmname{#1} \thmnumber{ #2}\thmnote{#3} (continued)}

\theoremstyle{example_contd}

\newcommand{\chapterref}[1]{\hyperref[ch:#1]{Chapter~\ref{ch:#1}}}

\newcommand{\claimref}[1]{\hyperref[claim:#1]{Claim~\ref{claim:#1}}}

\newcommand{\corollaryref}[1]{\hyperref[cor:#1]{Corollary~\ref{cor:#1}}}

\newcommand{\definitionref}[1]{\hyperref[def:#1]{Definition~\ref{def:#1}}}

\newcommand{\equationref}[1]{\hyperref[eq:#1]{Equation~\ref{eq:#1}}}

\newcommand{\factref}[1]{\hyperref[fact:#1]{Fact~\ref{fact:#1}}}

\newcommand{\figureref}[1]{\hyperref[fig:#1]{Figure~\ref{fig:#1}}}

\newcommand{\tableref}[1]{\hyperref[tab:#1]{Table~\ref{tab:#1}}}

\newcommand{\itemref}[1]{\hyperref[item:#1]{Item~(\ref{item:#1})}}

\newcommand{\lemmaref}[1]{\hyperref[lem:#1]{Lemma~\ref{lem:#1}}}

\newcommand{\propref}[1]{\hyperref[prop:#1]{Proposition~\ref{prop:#1}}}

\newcommand{\expref}[1]{\hyperref[exp:#1]{Example~\ref{exp:#1}}}

\newcommand{\propositionref}[1]{\hyperref[prop:#1]{Proposition~\ref{prop:#1}}}

\newcommand{\remarkref}[1]{\hyperref[rem:#1]{Remark~\ref{rem:#1}}}

\newcommand{\sectionref}[1]{\hyperref[sec:#1]{Section~\ref{sec:#1}}}

\newcommand{\theoremref}[1]{\hyperref[thm:#1]{Theorem~\ref{thm:#1}}}

\newcommand{\assumptionref}[1]{\hyperref[ass:#1]{Assumption~\ref{ass:#1}}}

\usepackage[utf8]{inputenc}
\usepackage[T1]{fontenc}
\usepackage{kpfonts}
\usepackage{microtype}

\renewcommand{\Pr}{\mathrm{Pr}}

\renewcommand{\leq}{\leqslant}

\renewcommand{\geq}{\geqslant}

\newcommand{\from}{\colon}

\usepackage{bm}

\newcommand{\ignore}[1]{}

\DeclareMathOperator*{\argmax}{arg\,max}

\renewcommand{\epsilon}{\varepsilon}

\newcommand{\remove}[1]{}

\newcommand{\ind}{\mathbf{1}}

\newcommand{\1}{\ind}

\newcommand{\Exi}[2]{\mathbb{E}_{#1}\left[#2\right]}

\usepackage{fontawesome}
\usepackage{xcolor}
\usepackage{natbib}
\usepackage{tikz}
\usetikzlibrary{arrows.meta, positioning, calc, backgrounds, fit}
\usepackage{xcolor}
\definecolor{focusred}{RGB}{165, 28, 48}
\definecolor{learnblue}{RGB}{30, 80, 160}
\usepackage{url}
\usepackage{xspace}
\usepackage{comment}
\usepackage{subcaption} 
\usepackage{xfrac}
\usepackage{graphicx}
\usepackage{booktabs}
\usepackage{tabularx}

\usepackage[affil-it]{authblk}

\title{On the Meta-Design of Allocation Problems}

\author[1,2]{Unai Fischer-Abaigar\thanks{Corresponding author: \texttt{Unai.FischerAbaigar@stat.uni-muenchen.de}}}
\author[3]{Emily Aiken}
\author[1,2]{Christoph Kern}
\author[4, 5]{Juan Carlos Perdomo}

\affil[1]{LMU Munich}
\affil[2]{Munich Center for Machine Learning}
\affil[3]{University of California San Diego}
\affil[4]{New York University}
\affil[5]{Massachusetts Institute of Technology}

\date{}
\begin{document}

\maketitle

\begin{abstract}
There is an extensive literature that studies how to find optimal policies in resource allocation problems, taking the underlying design parameters that define the allocation, such as what data is collected, how many people can be served, and quality of service as fixed constraints. Yet, from a planner’s perspective, these design parameters are themselves optimization variables that are just as important in determining overall welfare as selecting the optimal targeting rule for a given set of constraints. This realization motivates a rich set of meta-design questions exploring how planners should make principled decisions about investments in prediction, capacity constraints, and treatment quality, all of which lie upstream of classical policy optimization. Building on initial theoretical work in this space, our paper has three main contributions. First, we formally define the broad meta-design space of resource allocation problems. Second, we develop empirical tools that enable practitioners to reliably navigate it. Third, we demonstrate the framework in two real-world case studies on German employment services and targeted cash transfer programs in Ethiopia. 
\end{abstract}
\pagenumbering{gobble}
\pagenumbering{arabic}

\section{Introduction}
Public institutions increasingly use predictive algorithms to guide the allocation of scarce societal resources. For instance, algorithms are used to target tutoring services to students at risk of dropout \citep{perdomo2025wis}, allocate public housing units \citep{vayanos2026robust}, and to  prioritize vulnerable communities for early access to new vaccines \citep{barda2020developing}. Given a specification for an allocation problem --- that is, a set of data measurements (features, outcomes), utility function, and limit on the number of individuals intervened on --- there is a large body of work that studies how to learn optimal decision rules that specify who in the population should receive resources. These studies range from learning optimal treatment assignment rules through empirical welfare maximization \citep{athey2021, kitagaw2018}, to training constrained predictors that satisfy operational or fairness requirements \citep{cotter2019}, to establishing calibration conditions under which post-processing of predictions leads to optimal downstream decisions \citep{hu2023omnipredictors}.

Before deciding on a targeting rule, a social planner inevitably needs to first decide on the parameters that define the allocation (see Figure~\ref{fig:meta-design}).   
What data is collected, how many people can be served, and the quality of interventions are ultimately design choices of a social planner. Furthermore, these \emph{meta-design} decisions are just as if not more important than the downstream choice of decision rule. 
For instance, a planner who decides to expand program capacity while keeping the targeting rule fixed may do far more good than one who perfects the targeting rule for a program that is under-allocating. 

\paragraph{Our Contributions.} Given this background, our work has three main contributions. First, we formally define the space of questions in the meta-design of data-driven allocation problems. Second, we provide tools that enable practitioners to reliably understand which components of their allocation system matter most for downstream welfare, and how to prioritize investments across them. And lastly, we illustrate our methodology in the context of two real-world case studies on German employment services and poverty targeting in Ethiopia. 

On a technical level, our analysis centers on an object we term the \emph{welfare surface}. Intuitively, the welfare surface describes how the value achieved by the optimal targeting rule for a given choice of design parameters varies with these meta-design parameters (capacity constraints, data distributions, treatment effects). Drawing on connections to classical microeconomic theory and nonlinear optimization, we show how the welfare surface empowers planners to address a range of meta-design questions. These include which design axes matter most in their current configuration, testing whether existing design choices are optimal given fixed budgets, and how to allocate budgets among competing investments when expanding a system.

As a final note, a key insight of our work is that meta-design conclusions are inherently context-dependent. As shown via our case studies, which investments matter most depends on a planner's current configuration, data, and constraints in ways that no general theory can fully anticipate. Rather than prescribing one-size-fits-all answers, we give planners the tools to derive rigorous conclusions that are valid in their own context and the framework to know when those conclusions change. Importantly, we demonstrate this on rich real-world data drawn directly from the institutional settings we study, from household poverty surveys used to target cash transfers to real-world governmental labor market records used to profile job seekers.

\begin{figure}[t]
    \centering
    \resizebox{0.8\linewidth}{!}{%
    \begin{tikzpicture}[
        node distance=3cm,
        box/.style={draw=black, very thick, rounded corners, minimum height=1.4cm, minimum width=3.2cm, align=center, font=\small},
        learnbox/.style={draw=learnblue, very thick, rounded corners, minimum height=1.4cm, minimum width=3.2cm, align=center, font=\small},
        focusbox/.style={draw=focusred, very thick, rounded corners, minimum height=1.4cm, minimum width=3.2cm, align=center, font=\small},
        focusarrow/.style={-{Stealth[length=3mm]}, thick, focusred},
        learnarrow/.style={-{Stealth[length=3mm]}, thick, learnblue},
        focuslabel/.style={font=\footnotesize, align=center, text=focusred},
        learnlabel/.style={font=\footnotesize, align=center, text=learnblue},
        focuscaption/.style={font=\footnotesize, align=center, text width=3.2cm, text=focusred},
        learncaption/.style={font=\footnotesize, align=center, text width=3.2cm, text=learnblue},
    ]
    
    \node[focusbox] (meta) {Meta-Design};
    \node[learnbox, right=of meta] (learning) {Policy\\Learning};
    \node[box, right=of learning] (welfare) {Welfare};
    
    \draw[focusarrow] (meta) -- node[above, focuslabel] {Data Collection \\Capacity Constraints \\ Resource Quantity \\
    \dots} (learning);
    \draw[learnarrow] (learning) -- node[above, learnlabel] {Targeting Rule} (welfare);
    
    \node[focuscaption, below=0.35cm of meta] {Which allocation problem to set up?};
    \node[learncaption, below=0.35cm of learning] {How to solve the allocation problem?};
    
    \end{tikzpicture}%
    }
    \caption{A planner shapes the allocation problem through investment decisions, which feed into a policy learning algorithm that produces a targeting rule and determines the resulting welfare.}
    \label{fig:meta-design}
\end{figure}

\section{Related Work}
\paragraph{Prediction-Access Ratio.} The most related prior investigations are \citet{pmlr-v235-perdomo24a} and \citet{pmlr-v267-fischer-abaigar25a}. They introduce the prediction-access ratio (PAR) to study the relative welfare gains achieved by improving prediction and expanding program capacity in resource allocation contexts. Their main results identify conditions under which the relative value of prediction is low in stylized theoretical models. We build on this work in two directions. First, we explore a broader space of meta-design questions, not just the relative welfare gains between prediction and access. Second, their analysis primarily relies on stylized Gaussian models, which allow for sharp theoretical insights but constrain external validity. We show how a planner can explore the design space of their own program, without relying on whether its primitives happen to match a prior theoretical model.

\paragraph{Design Space of Allocations.} A wider literature points to the size of this design space by examining different components of allocation systems. \citet{perdomo2025wis} find that early warning systems in Wisconsin schools can deliver much of their value without individual-level risk prediction, while \citet{Mashiat_Fowler_Das_2025} find that individually targeted eviction-prevention outreach substantially outperforms neighborhood-based targeting. On the theoretical side, \citet{shirali2024} show that individual targeting only outperforms aggregate targeting in high-budget regimes when inequality is low, and \citet{casacuberta2026goodallocationsbadestimates} find that learning a good allocation can require less data than estimating heterogeneous effects. Other contributions examine alternative notions of welfare, including frameworks that motivate randomization in allocation \citep{jain2024scarceresourceallocationsrely}.

\paragraph{Solving Allocation Problems.} A related literature focuses on \emph{solving} allocation problems once formulated. Common approaches include learning risk scores to rank individuals \citep{kleinberg2015, liu2026bridgingpredictioninterventionproblems, wang2024}, estimating counterfactual risk or treatment effects to account for causal structure \citep{coston2020counterfactualriskassessmentsevaluation, Wager03072018, künzel2019}, and learning optimal treatment rules directly \citep{kitagaw2018, athey2021, manski2004, kallus2018}. Decision-focused learning trains predictors to account for downstream optimization \citep{elmachtoub2022, ren2024decisionfocused}. Recent work clarifies how these perspectives relate, connecting estimation targets to utility under different problem formulations \citep{sharma2025comparing, liu2026bridgingpredictioninterventionproblems, FISCHERABAIGAR2024101976, kern2025reliable}. 

Our work is not primarily concerned with solving a given allocation problem, but rather with the meta-design of the allocation problem itself and the tradeoffs that arise from different design choices.

\begin{figure}[ht]
\centering
\subfigure[Local Rate of Improvement\label{fig:welfare-surface-a}]{\includegraphics[width=0.24\textwidth]{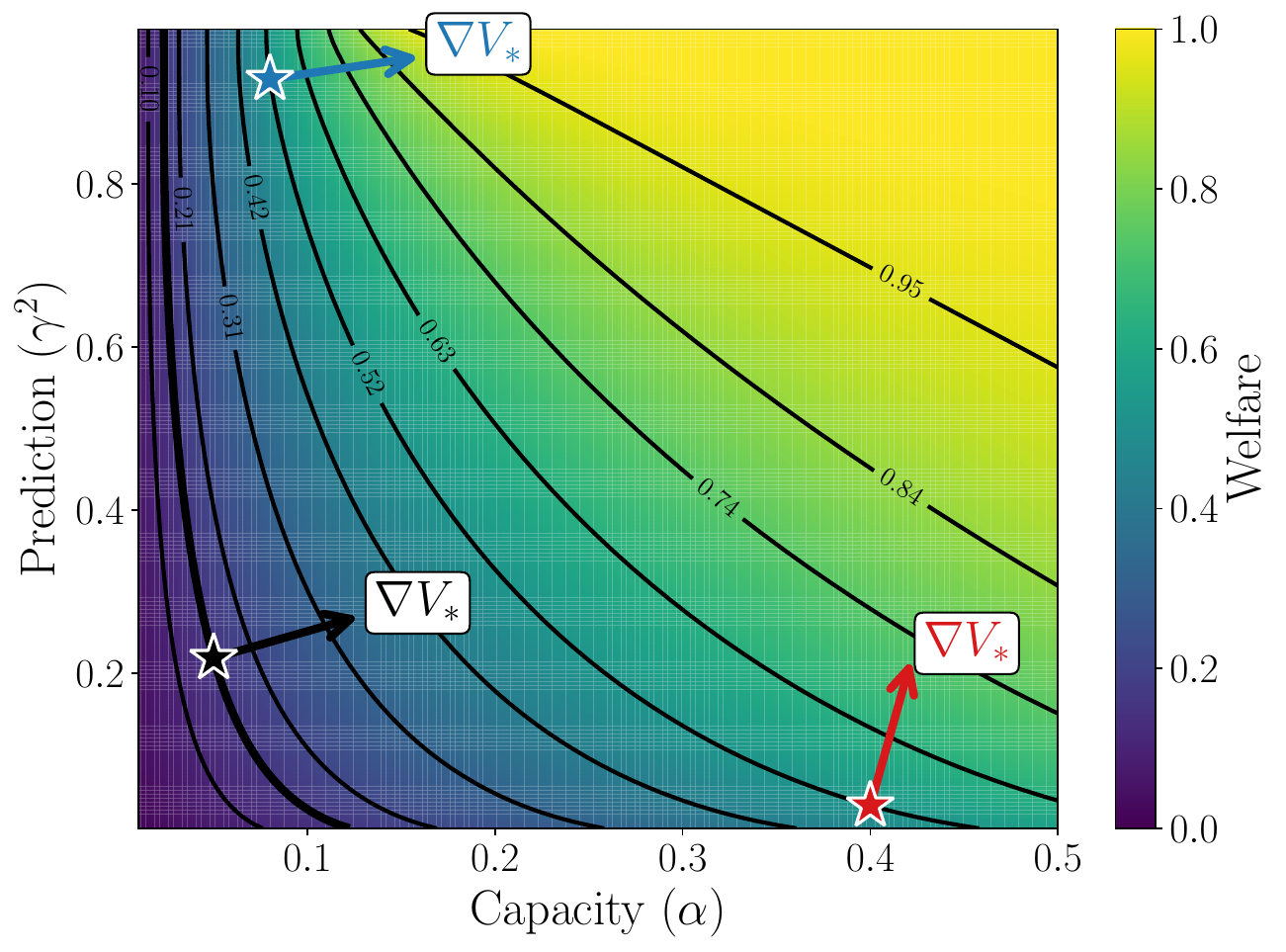}}
\subfigure[Inspecting Optimality\label{fig:welfare-surface-b}]{\includegraphics[width=0.24\textwidth]{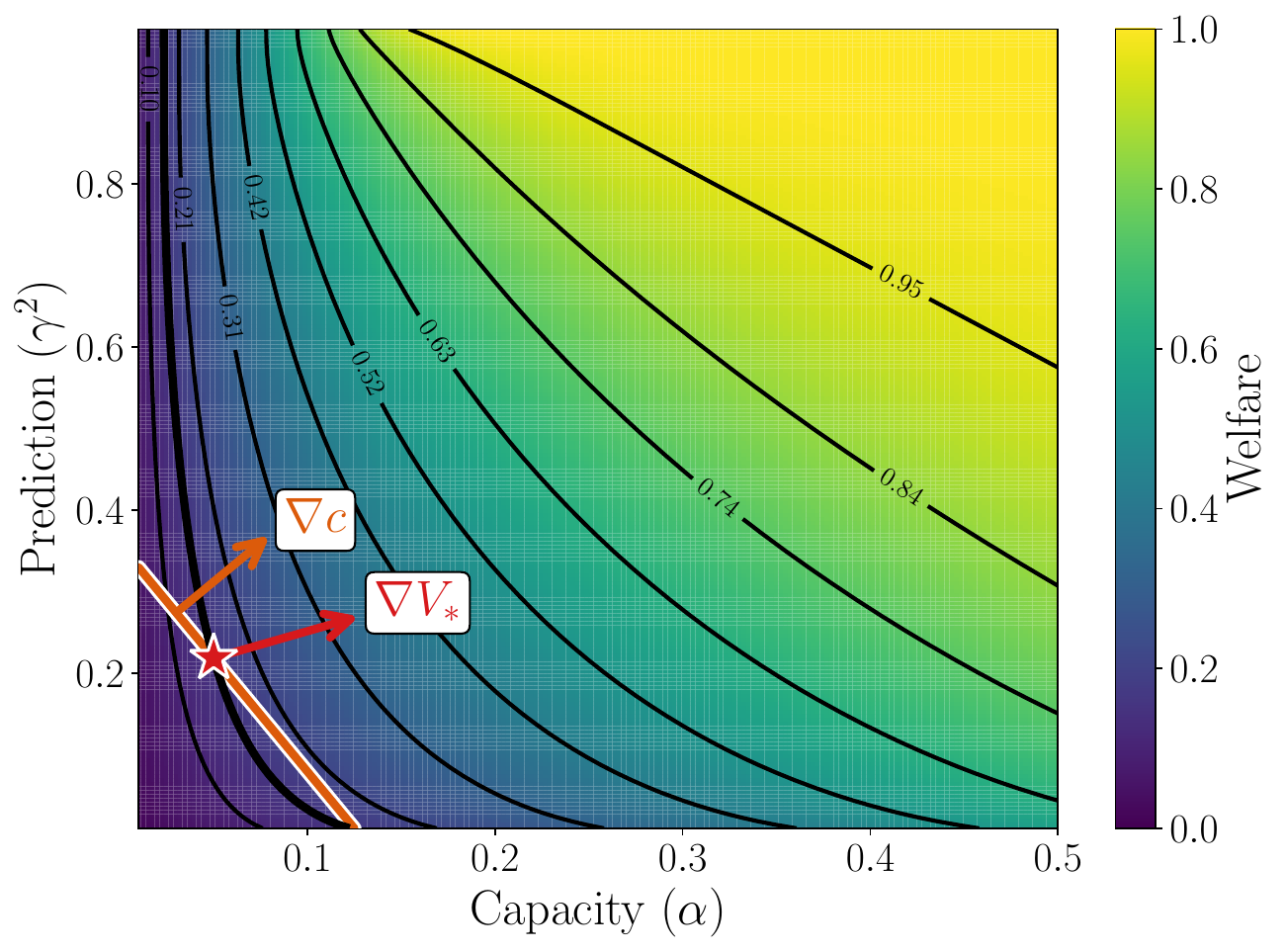}} \subfigure[Budget Optimization\label{fig:welfare-surface-c}]{\includegraphics[width=0.24\textwidth]{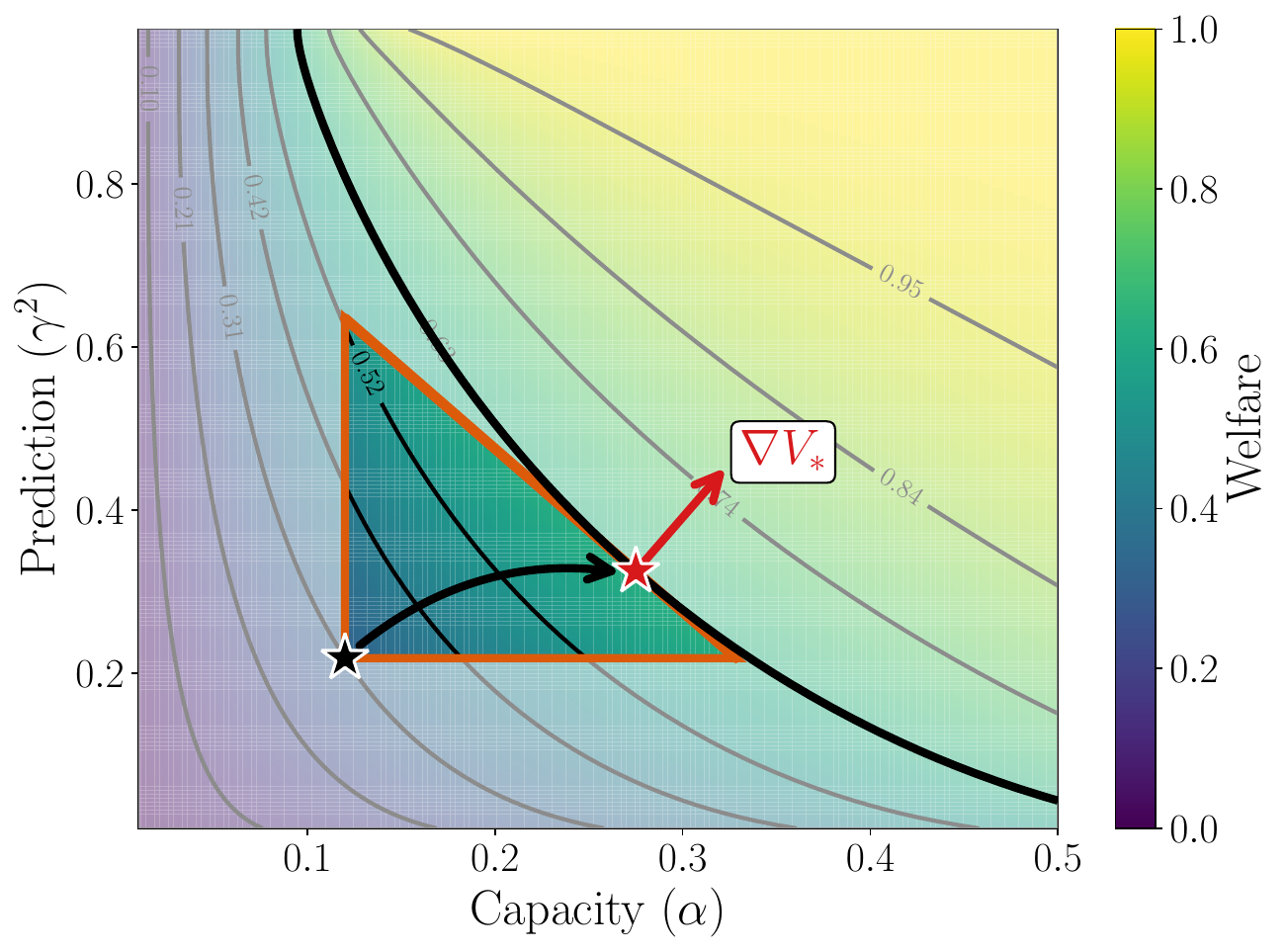}}
\subfigure[Expansion Path\label{fig:welfare-surface-d}]{\includegraphics[width=0.24\textwidth]{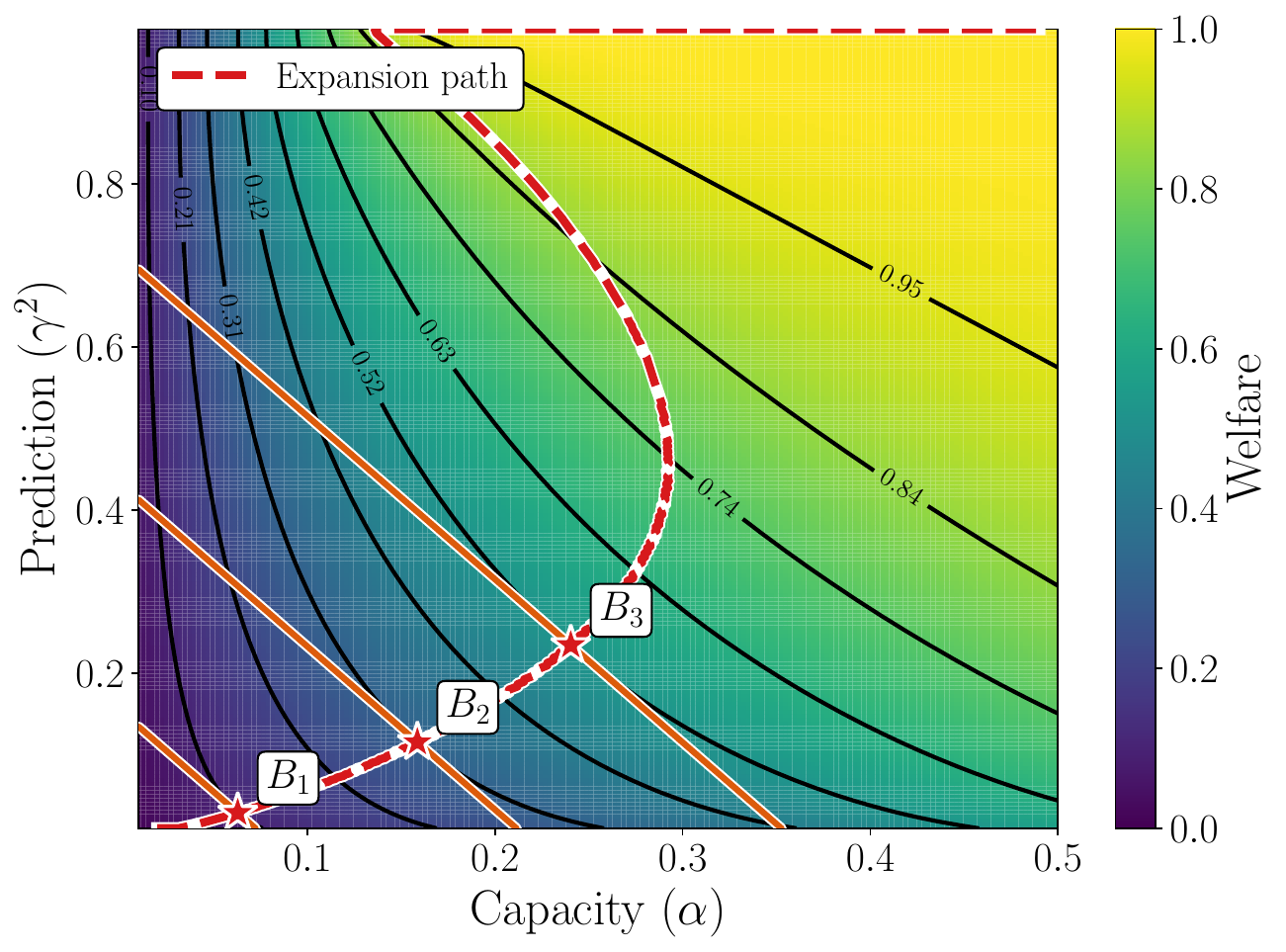}}
\caption{Welfare surface for the stylized Gaussian model of \cite{pmlr-v267-fischer-abaigar25a}, over program capacity $\alpha$ and prediction quality ($\gamma^2 $ coefficient of determination). Black lines mark iso-welfare contours. \textbf{(a)} Normalized welfare gradients at three configurations. \textbf{(b)} When the welfare gradient is not aligned with the cost gradient (orange line, constant cost), the configuration is not locally optimal. \textbf{(c)} Feasible region (orange) under additive linear costs from an existing system (black star). \textbf{(d)} The red line traces the optimal configuration as the budget grows.}
\label{fig:welfare-surface}
\end{figure}

\section{The Meta-Design of Resource Allocation Problems}
\label{sec: prediction in allocation}

Here, we formalize the meta-design space of resource allocation problems, introduce the welfare surface as the central analytical object, and outline the questions a planner can answer by studying it. 
We start our presentation by defining the optimal social welfare achieved under a fixed set of design parameters. Throughout our work, we focus on allocation problems where a social planner assigns a single scarce resource to individuals in a population, as defined below:
\begin{definition}[Allocation Problem]
\label{def:alloc}
Let $\mathcal{D}$ be a distribution over pairs $(x,y) \in \mathcal{X} \times \mathcal{Y}$ and let $\alpha \in (0,1)$ denote a capacity constraint. Fix a utility function $u: \mathcal{Y} \times \{0,1\} \rightarrow \mathbb{R}$. We define $V_*(\mathcal{D}, u, \alpha)$ to be the value of the optimal policy $\pi \from \mathcal{X} \to [0, 1]$ with respect to $(\mathcal{D}, u, \alpha)$,
\begin{align*}
    V_*(\mathcal{D}, u, \alpha) := \underset{\pi \from \mathcal{X} \to [0, 1]}{\max} \; \Exi{\mathcal{D}}{u(y, \pi(x))} \quad \text{s.t.} \; \Exi{\mathcal{D}}{\pi(x)} \leq \alpha
\end{align*}
\end{definition}
The allocation problem is defined based on three primitives; the utility function $u$, a distribution $\mathcal{D}$ which implicitly defines the set of features and outcomes, and the capacity constraint $\alpha$. The outcome $y$ and the utility function $u$ together encode the social planner's objective. Given these three primitives, $V_*(\mathcal{D}, u, \alpha)$ denotes welfare under the optimal targeting rule.

\begin{example}[Targeting the Worst-Off]
\label{worst-off-example}
Consider the problem of allocating attention to individuals deemed to be ``worst-off'' according to outcomes $y$, e.g., job seekers at risk of long-term unemployment. There, $\mathcal{D}$ is a distribution over $\mathcal{X} \times \mathbb{R}$ and the utility function $u(y, a)$ is equal to $a \cdot \1\{y \geq t_\beta\}$ where $t_\beta$ is a pre-specified cutoff and $a \in \{0,1\}$. Learning plays a role in these settings since the optimal targeting rule is to first approximate  $f_*(x) = \Pr[y\geq t_\beta|x]$, and then allocate to individuals with the highest values of $f_*(x)$. That is, $\pi_*(x) = \1\{f_*(x) \geq t_\alpha\}$ where $t_\alpha$ is chosen to satisfy the capacity constraint in Definition \ref{def:alloc}.
\end{example}

In practice, $\pi_*$ is unknown and must be learned from finite-samples. As we discussed, there is by now a rich literature \citep{bhattacharya2012inferring,slivkins2023contextual,hu2023omnipredictors} that given as input $(\mathcal{D}, u, \alpha)$ proposes algorithms for learning $\pi_*$ using i.i.d samples drawn from $\mathcal{D}$. In our work, we make use of these algorithms as a subroutine and focus on the task of optimizing over the choice of parameters $(\mathcal{D}, u, \alpha)$ themselves.

\paragraph{The Meta-Design Universe.} When building a resource allocation system, the first order of business for a planner is to decide on the resources to provide to selected individuals (for instance, the dollar amount in cash transfers), the goals of the system (jointly encoded in a utility function $u$), the information available for targeting (a data distribution $\mathcal{D}$), and the scale of the program (capacity constraints $\alpha$). In other words, planners start by selecting a precise choice of design parameters $(\mathcal{D}, u, \alpha)$ from a set, or \emph{design universe}, $\mathcal{U}$ of parameters. 

The structure of the set $\mathcal{U}$ depends on the planner's situation. When improving an existing system, $\mathcal{U}$ is naturally defined as the set of configurations reachable from a current baseline $(\mathcal{D}_0, \alpha_0, u_0)$ via available \emph{policy levers}, each lever defining an axis of improvement along which the planner can invest more or less. For example, expanding capacity by $\Delta_\alpha \in [0, \bar\alpha]$ traces out a path $\mathcal{U}_{\mathrm{cap}} = \{(\mathcal{D}_0, u_0, \alpha_0 + \Delta_\alpha) :\Delta_\alpha \in [0,\bar{\alpha}]\}$. Potential improvements in prediction define a set $\mathcal{U}_{\mathrm{pred}} = \{(\mathcal{D}_\lambda, u_0, \alpha_0,) : \lambda \in \Lambda \}$, where $\mathcal{D}_\lambda$ is a family of distributions. For example, $\lambda$ could index distributions over different feature sets. When considering improvements along multiple axis, the combined design universe is the joint parameter space, for example $\mathcal{U} = \{(\mathcal{D}_\lambda, \alpha_0, u_0) : \lambda \in \Lambda, \ \Delta_\alpha \in [0, \bar{\alpha}] \}$.

Whether building a system from scratch or improving an existing system, the design universe $\mathcal{U}$ will typically be subject to a total budget $B$, $\mathcal{U}_B = \{(\mathcal{D}, u, \alpha) \in \mathcal{U} : c(\mathcal{D}, u, \alpha) \leq B\}$ where $c(\mathcal{D}, u, \alpha)$ is the implementation cost of a given configuration. The planner's meta-design problem is then to select an element of $\mathcal{U}_B$ that maximizes downstream welfare $V_*(\mathcal{D}, u, \alpha)$.

\paragraph{The Welfare Surface.} Each element of the design universe $\mathcal{U}$ corresponds to a distinct allocation problem, and therefore a distinct level of downstream welfare under the optimal targeting rule. 
This mapping $(\mathcal{D}, \alpha, u) \rightarrow V_*(\mathcal{D}, \alpha, u)$ for $(\mathcal{D}, u, \alpha) \in \mathcal{U}$ defines the \emph{welfare surface} describing the performance of the allocation system under different configurations. 
Intuitively, the welfare surface is a landscape over the space of design parameters: high regions correspond to configurations that achieve strong downstream welfare. And as we will illustrate next, the geometry of the surface, such as its gradients and contour lines, provides insight into the relative value of different policy levers.

Given a dataset $S =\{(x_i,y_i)\}_{i=1}^n$ the welfare surface can be \textit{empirically computed} for practically relevant design universes. For example, one can $a)$ simulate welfare under expanded access $\alpha \rightarrow \alpha +\Delta$, allocating to more individuals, and $b)$ simulate the effects of enhanced interventions (\textit{e.g.} model the impact of a larger cash transfer under a specific choice of utility), or even $c)$ training under different distributions of feature sets. In Section~\ref{sec:poverty}, we compute welfare surfaces over data coverage, share of households eligible, and cash transfer size for poverty targeting in Ethiopia. In Section~\ref{sec: employment}, we do the same over institutional capacity and feature collection for the profiling of unemployed job seekers in Germany. As part of our work, we also provide open source software\footnote{Available at \url{https://github.com/unai-fa/relative-value-of-prediction}.} than enables practitioners to compute the design surface for common sets $\mathcal{U}$.

This stands in contrast to earlier theoretical work, which analyzed the relative value of policy levers under stylized assumptions such as Gaussian data and linear utilities 
\citep{pmlr-v235-perdomo24a, FISCHERABAIGAR2024101976}. Our approach lets planners ask a broader set of questions directly in their own problem settings, without first checking whether their data matches these prior stylized settings.

\subsection{Navigating the Meta-Design Universe}
Given a design space of feasible allocation problems $\mathcal{U}$, there are a number of questions that a social planner needs to answer. These include:
\begin{enumerate}[label={(Q\arabic*)}, left=1mm]
\item\label{q1} Given my current system, what kinds of local improvements are most effective?
\item\label{q2} Is my current configuration optimal for my budget?
\item\label{q3} How should I choose parameters optimally? 
\end{enumerate}
Given the framework we have set up so far, we now show how these questions can be answered by inspecting the welfare surface. As a means of presentation, we illustrate an overview of this methodology in the context of the theoretical model presented in \citet{FISCHERABAIGAR2024101976}, showing how it generalizes their analysis. They consider allocation systems that aim to identify individuals in need (see Example~\ref{worst-off-example}), and model outcomes $y$ and predictions $\hat{y}$ as following a bivariate normal distribution $(y,\hat{y}) \sim \mathcal{D}_\gamma$ with correlation $\gamma$. 
The design axes they analyze are the capacity of the allocation system $\alpha$ and stylized improvements in prediction by varying the squared-correlation (or $r^2$ of the predictions) $\gamma^2$, implying a design universe $\mathcal{U} = \{\mathcal{D}_\gamma, u, \alpha \from \gamma \in [0, 1], \ \alpha \in [0, 1] )$.  

More generally, when the design universe admits a suitable parametrization, we can index each design configuration $(\mathcal{D}, u, \alpha)$ by a vector $\theta \in \Theta=\Theta_1\times\cdots\times\Theta_d \subseteq \mathbb{R}^d_{\geq 0}$ where each coordinate $\theta_i\in\Theta_i\subseteq\mathbb{R}_{\geq0}$ corresponds to one design axis. In \cite{pmlr-v267-fischer-abaigar25a}, $\theta=(\alpha,\gamma^2) \in [0,1]^2$ and $V_*(\theta) := V_*(\alpha, \gamma^2) = V_*(\mathcal{D}_{\gamma}, u, \alpha)$ for a fixed utility $u(y, a) = a \cdot \1\{y \geq t_\beta\}$. Figure~\ref{fig:welfare-surface} shows the corresponding welfare surface. In Section~\ref{sec:casestudies}, we will instantiate our framework empirically, computing welfare surfaces from real data and exploring more realistic improvement axes.

\paragraph{Local Improvements \ref{q1}.} Given a parameterization $\theta$ of the design universe $\mathcal{U}$, the gradients of the welfare surface encode the magnitudes of incremental improvements along various axes, 
\begin{align*}
 \nabla V_*(\alpha, \gamma^2) = \left(   \frac{\partial}{\partial \alpha} V_*(\alpha, \gamma^2),  \frac{\partial}{\partial \gamma^2} V_*(\alpha, \gamma^2) \right)^\top   
\end{align*}
The two components capture the marginal value of expanding access and the marginal value of improving prediction, respectively.

As an analytical tool, the gradient of the welfare surface enables planners to directly visualize, and communicate to others, which design parameters are most important at a particular configuration. In particular, as seen in Figure~\ref{fig:welfare-surface-a}, the gradient of the welfare surface evaluated at a particular point is perpendicular to the contour line of that point. Vertical gradients, such as those on the bottom right of the plot indicate that a small improvement in prediction are relatively much more valuable than small improvements in access. Horizontal gradients, such as those on the left side, indicate the opposite: the marginal benefit of expanding access far outweighs that of expanding prediction. 

Our framing generalizes the insights from prior theoretical work. In particular, \cite{pmlr-v235-perdomo24a,FISCHERABAIGAR2024101976} formalize the relative value of prediction in allocation problems via a notion they term the \emph{prediction-access ratio} (PAR), defined as, 
\begin{align*}
    \mathsf{PAR}(\alpha, \gamma^2) = \frac{\partial}{\partial \alpha} V_*(\alpha, \gamma^2) \;/\; \frac{\partial}{\partial \gamma^2} V_*(\alpha, \gamma^2)
\end{align*}
This is exactly equal to the ratio of the entries in the gradient of the welfare surface for this specific choice of a design universe $\mathcal{U}$. Their main results consist of showing that this ratio is very large for small values of $\alpha$: an insight immediately visible from the near-flat gradients on the left of Figure~\ref{fig:welfare-surface-a}, which imply that the marginal welfare value of expanding capacity far exceeds that of improving the predictor.

Relative to prior work, our framework extends this analysis in two ways: it applies to any pair of design axes rather than just prediction versus capacity, and it can be computed empirically on data directly corresponding to a user's domain rather than requiring modeling assumptions. In particular, the ratio of any two partial derivatives of $V_*$ 
defines a \emph{marginal rate of substitution}, 
\begin{align*}
\mathsf{MRS}_{ij}(\theta) = \frac{\partial}{\partial \theta_i} V_*(\theta) \;/\; \frac{\partial}{\partial \theta_j} V_*(\theta)
\end{align*}
In microeconomic theory, the MRS between $i$ and $j$ denotes the rate at which a consumer can exchange good $i$ for good $j$ without changing welfare \citep{mas1995microeconomic}. In allocations, the MRS between two policy levers (of which the PAR is just a special case) specifies the rate at which a planner is willing to tradeoff two design parameters (e.g. prediction versus access) to maintain the same performance. As we will now show, the MRS gives practioners insight into whether their systems are cost-optimal.

\paragraph{Inspecting Optimality \ref{q2}.} When a planner has a fixed budget $B$ and cost function $c: \Theta \rightarrow \mathbb{R}_{\geq0}$, mapping design parameters to real numbers, the welfare surface enables planners to test whether their local configurations are suboptimal. In particular, consider the problem of checking whether a particular parameter vector $\theta \in \Theta=\Theta_1\times\cdots\times\Theta_d$ is optimal for a given budget constraint. That is, we want to test the following: $\theta^* \in \argmax_{\theta \in \Theta,\; c(\theta) \leq B}  V_*(\theta)$.

Classical nonlinear optimization theory states that any solution $\theta^*$ to this problem must satisfy the following first-order conditions, whenever the KKT
conditions apply:

\begin{fact}
\label{lem:kkt}
Suppose $\Theta=[0,1]^d$, and let $V_*:\Theta\to\mathbb{R}$ and 
$c:\Theta\to\mathbb{R}_{\geq0}$ be continuously differentiable. Suppose that $\theta^*$ solves
\begin{align*}
\max_{\theta\in\Theta,\; c(\theta)\leq B} V_*(\theta)
\end{align*}
and suppose that a standard constraint qualification holds at $\theta^*$, as is the case here when $c$ is linear in $\theta$. Then there exists a budget multiplier $\lambda\geq0$, with $\lambda(c(\theta^*)-B)=0$, such that, for each coordinate $i\in\{1,\dots,d\}$, 
$\frac{\partial V_*}{\partial \theta_i}(\theta^*) =
\lambda \frac{\partial c}{\partial \theta_i}(\theta^*)$ if
$\theta_i^*\in(0,1)$,
$\frac{\partial V_*}{\partial \theta_i}(\theta^*) \geq
\lambda \frac{\partial c}{\partial \theta_i}(\theta^*)$ if
$\theta_i^*=1$, and
$\frac{\partial V_*}{\partial \theta_i}(\theta^*) \leq
\lambda \frac{\partial c}{\partial \theta_i}(\theta^*)$ if
$\theta_i^*=0$.
\end{fact}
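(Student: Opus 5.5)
We treat this as a direct application of the Karush--Kuhn--Tucker theorem to a problem with one budget inequality and $2d$ box constraints. We rewrite the problem as maximizing $V_*(\theta)$ subject to $g_0(\theta) := c(\theta) - B \leq 0$, $g_i^{+}(\theta) := \theta_i - 1 \leq 0$, and $g_i^{-}(\theta) := -\theta_i \leq 0$ for $i=1,\dots,d$. All constraint functions are continuously differentiable, and so is $V_*$ by hypothesis.

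\textbf{Step 1: constraint qualification.} When $c$ is linear (affine) in $\theta$, every constraint is affine. The Abadie constraint qualification then holds automatically at every feasible point, because the linearized tangent cone coincides with the tangent cone of a polyhedron. We would cite this standard fact; in the general case we simply assume the qualification, as the statement does.

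\textbf{Step 2: invoke KKT.} Under the qualification, the KKT theorem for a local maximizer yields multipliers $\lambda \geq 0$ and $\mu_i, \nu_i \geq 0$ such that
\[
\nabla V_*(\theta^*) = \lambda \nabla c(\theta^*) + \sum_i \mu_i e_i - \sum_i \nu_i e_i ,
\]
with complementary slackness $\lambda(c(\theta^*)-B)=0$, $\mu_i(\theta_i^*-1)=0$, and $\nu_i\theta_i^*=0$.

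\textbf{Step 3: read off coordinates.} Coordinatewise, $\partial_i V_*(\theta^*) - \lambda\,\partial_i c(\theta^*) = \mu_i - \nu_i$, and we split into three cases.
\begin{itemize}
\item If $\theta_i^*\in(0,1)$, both box constraints are slack, so $\mu_i=\nu_i=0$ and equality holds.
\item If $\theta_i^*=1$, then $\nu_i=0$ because $\theta_i^*\neq 0$, so the difference equals $\mu_i\geq 0$.
\item If $\theta_i^*=0$, then $\mu_i=0$, so the difference equals $-\nu_i\leq 0$.
\end{itemize}
These are exactly the three stated conditions.

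The only nontrivial point is Step 1: verifying that affine constraints suffice for the qualification even when the budget constraint and box constraints are active simultaneously. We would handle it by citing the linearity constraint qualification rather than LICQ. LICQ can fail here: if $\nabla c$ is a combination of the active $e_i$, the active gradients become linearly dependent. The remainder of the argument is bookkeeping.
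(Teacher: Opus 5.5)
Your proposal is correct and follows the paper's proof essentially step for step. It uses the same reformulation with the budget constraint plus $2d$ box constraints, the same stationarity and complementary-slackness conditions, and the same three-case reading of each coordinate. The one addition is your explicit justification of the constraint qualification via linearity (Abadie), which the paper only invokes as ``by the KKT conditions''; your remark that LICQ can fail when the budget and box constraints are active together is correct and a useful clarification.
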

An immediate implication is that, at an optimal design, the marginal rate of substitution between any two design axes that are not at their bounds must equal the ratio of their marginal costs.
\begin{corollary}
Let $\theta^*\in\Theta$ satisfy the conditions of Fact~\ref{lem:kkt}. Then for any pair of coordinates $(i,j)$ with $0<\theta_i^*<1$ and $0<\theta_j^*<1$ for which the MRS and the cost ratio are well-defined, we have 
\begin{align*}
    \mathsf{MRS}_{ij}(\theta^*) = \frac{\partial}{\partial \theta_i} c(\theta^*) \;/\; \frac{\partial}{\partial \theta_j} c(\theta^*)
\end{align*}
\end{corollary}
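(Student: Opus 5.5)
The corollary follows directly from Fact~\ref{lem:kkt} by taking a ratio of the two interior stationarity conditions. I will apply the Fact at $\theta^*$ to obtain a multiplier $\lambda\geq 0$ with $\lambda(c(\theta^*)-B)=0$. Because $0<\theta_i^*<1$ and $0<\theta_j^*<1$, both coordinates fall in the interior case of the Fact, so the equality conditions hold:
\begin{align*}
\frac{\partial V_*}{\partial \theta_i}(\theta^*) = \lambda \frac{\partial c}{\partial \theta_i}(\theta^*), \qquad \frac{\partial V_*}{\partial \theta_j}(\theta^*) = \lambda \frac{\partial c}{\partial \theta_j}(\theta^*).
\end{align*}

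Next I divide the first equation by the second. The hypothesis that the MRS is well-defined means $\frac{\partial V_*}{\partial \theta_j}(\theta^*)\neq 0$. The hypothesis that the cost ratio is well-defined means $\frac{\partial c}{\partial \theta_j}(\theta^*)\neq 0$. From the second equation, $\lambda\frac{\partial c}{\partial\theta_j}(\theta^*)=\frac{\partial V_*}{\partial\theta_j}(\theta^*)\neq 0$, which forces $\lambda\neq 0$, so $\lambda>0$. Hence
\begin{align*}
\mathsf{MRS}_{ij}(\theta^*) = \frac{\lambda\, \partial c/\partial\theta_i(\theta^*)}{\lambda\, \partial c/\partial\theta_j(\theta^*)} = \frac{\partial c/\partial\theta_i(\theta^*)}{\partial c/\partial\theta_j(\theta^*)},
\end{align*}
which is the claimed identity.

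The one point that needs care is the degenerate case $\lambda=0$. In that case both interior partial derivatives of $V_*$ would vanish and the MRS would be $0/0$. The well-definedness hypothesis rules this out, so the step is bookkeeping rather than a real obstacle.

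The argument also shows that $\lambda>0$, so by complementary slackness the budget binds at $\theta^*$. This is not needed for the identity itself.
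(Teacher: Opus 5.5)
Your proof is correct and follows the paper's reasoning: the paper gives no separate proof and presents the corollary as an immediate implication of Fact~\ref{lem:kkt}, obtained by dividing the two interior equalities $\partial V_*/\partial\theta_k(\theta^*) = \lambda\, \partial c/\partial\theta_k(\theta^*)$ for $k=i,j$. Your argument that the well-definedness of the MRS rules out $\lambda=0$ spells out the cancellation step that the paper leaves implicit.
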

Figure~\ref{fig:welfare-surface-b} provides a visual illustration of this condition. Intuitively, if the marginal rate of substitution between two design axes is not equal to the marginal ratio of costs, then the planner is better off by reducing their investment along one dimension and reallocating it to another. For instance, in Figure~\ref{fig:welfare-surface-b}, the configuration at the top of the feasible region (blue) is locally inefficient; the  planner is better off investing less in prediction and expanding access. Doing so would not increase their costs, but significantly improve the overall performance of their system.

We remark that using our software toolkit, planners can input their own cost functions and check whether these optimality conditions hold in the context of their own planning problems. We illustrate this methodology in the context of various case studies in the next section.

\paragraph{Choosing Optimal Design Parameters \ref{q3}.} A key 
advantage of the welfare surface framework is that it cleanly separates two concerns: understanding how welfare varies as a function of design choices, and understanding what those choices cost. The welfare surface $V_*(\theta)$ is a property of the allocation problem alone and can be computed by practitioners on a fixed dataset and over a wide range of design universes $\mathcal{U}$, independently of any cost considerations. Budget constraints enter only as a second step, as a feasible region $\mathcal{U}_B = \{\theta \in \mathcal{U} : c(\theta) \leq B\}$ overlaid on top of the surface (see Figure~\ref{fig:welfare-surface-c}). Finding the optimal design then reduces to identifying the point in $\mathcal{U}_B$ that lies on the highest iso-welfare contour, a problem addressed by our software and which is efficiently solvable due to the often low-dimensional structure of $\mathcal{U}$.

This separation is practically valuable. Beyond testing for local optimality as in Q2, a planner can compute the welfare surface once and then evaluate it under many different cost structures or budget levels. As the budget $B$ grows, the feasible region expands and the optimal configuration traces an \emph{expansion path} through $\mathcal{U}$ (see Figure 
\ref{fig:welfare-surface-d}). The shape of this path reveals how the 
optimal mix of investments shifts with scale --- for instance, whether data collection should precede capacity expansion at low budgets, or whether the two should be pursued in parallel. 

\section{Empirical Case Studies}
\label{sec:casestudies}

As demonstrated, the geometry of the welfare surface encodes answers to a range of design questions that prior analytical work has only addressed in stylized settings. We now turn to practice and compute the welfare surface on real-world data in two case studies, illustrating the kinds of conclusions a planner can draw and how those conclusions shift as the problem specification changes. 

\subsection{Poverty Targeting in Ethiopia}
\label{sec:poverty}
Direct cash transfer programs are globally ubiquitous, reaching 1.36 billion people across 962 programs in 2020 \citep{gentilini2022cash}. Many are targeted using proxy means tests (PMTs) \citep{barrientos2018social}, which use a simple predictive model to infer household poverty and prioritize the poorest of those for cash transfers \citep{grosh1995proxy, brown2018poor}. While PMT accuracy has been assessed in several contexts \citep{brown2018poor, schnitzer2024targeting, mcbride2018retooling}, the trade-offs between investing in prediction and allocation in this setting have not previously been analyzed. See Appendix~\ref{sec:app poverty} for additional background.

\paragraph{Allocation Problem.} A social planner wants to distribute cash transfers to poor households but cannot directly observe household poverty. Instead, short ``poverty scorecard'' surveys are collected in the field and a predictive model is trained to estimate household poverty from the survey responses. Households with the highest estimated poverty are then eligible for transfers up to a capacity threshold, $\pi(x) = \mathbf{1}\{f(x) \leq t_\alpha\}$. We simulate this approach using Ethiopia's 2015 Living Standards Measurement survey, covering 4,694 households, which contains both poverty scorecard questions (the features $x$) and measure of household consumption (serving as ground-truth poverty $y$).

\paragraph{Design Space.}  Because the scorecard surveys used to estimate household poverty in PMTs are expensive to conduct --- with a median per-survey cost of \$13 USD PPP reported in the literature \citep{aiken2023moving} --- such PMT registry databases often fail to cover all households in a country, or even all households in the poorest parts of a country. A recent review of PMT registries in low- and middle-income countries found a median registry coverage of 21\% of households (ranging from less than 1\% in Belize to over 99\% in Argentina) \citep{grosh2022revisiting}. Households without data cannot be individually targeted, effectively dramatically eroding accuracy of the targeting model at deployment time\footnote{To simulate households without scorecard data being excluded from individual targeting, in our setting we assign these households the population-mean, corresponding to the planner having no individualized information about them. Appendix~\ref{app: test-time} considers an alternative implementation in which households without data are treated as ineligible for the program and placed at the end of the allocation queue.}. The administrators of PMT-targeted cash transfer programs thus face a trade-off in where to invest resources: they can allocate resources to cash transfers --- serving a larger share of households or increasing the amount each recipient receives --- or they can expand data collection (that is, measure features for more people), ensuring that more households have data available for prediction. 

\begin{figure}[t]
\centering
\subfigure[Local Improvements \ref{q1} \label{fig:step-a}]{\includegraphics[width=0.32\textwidth]{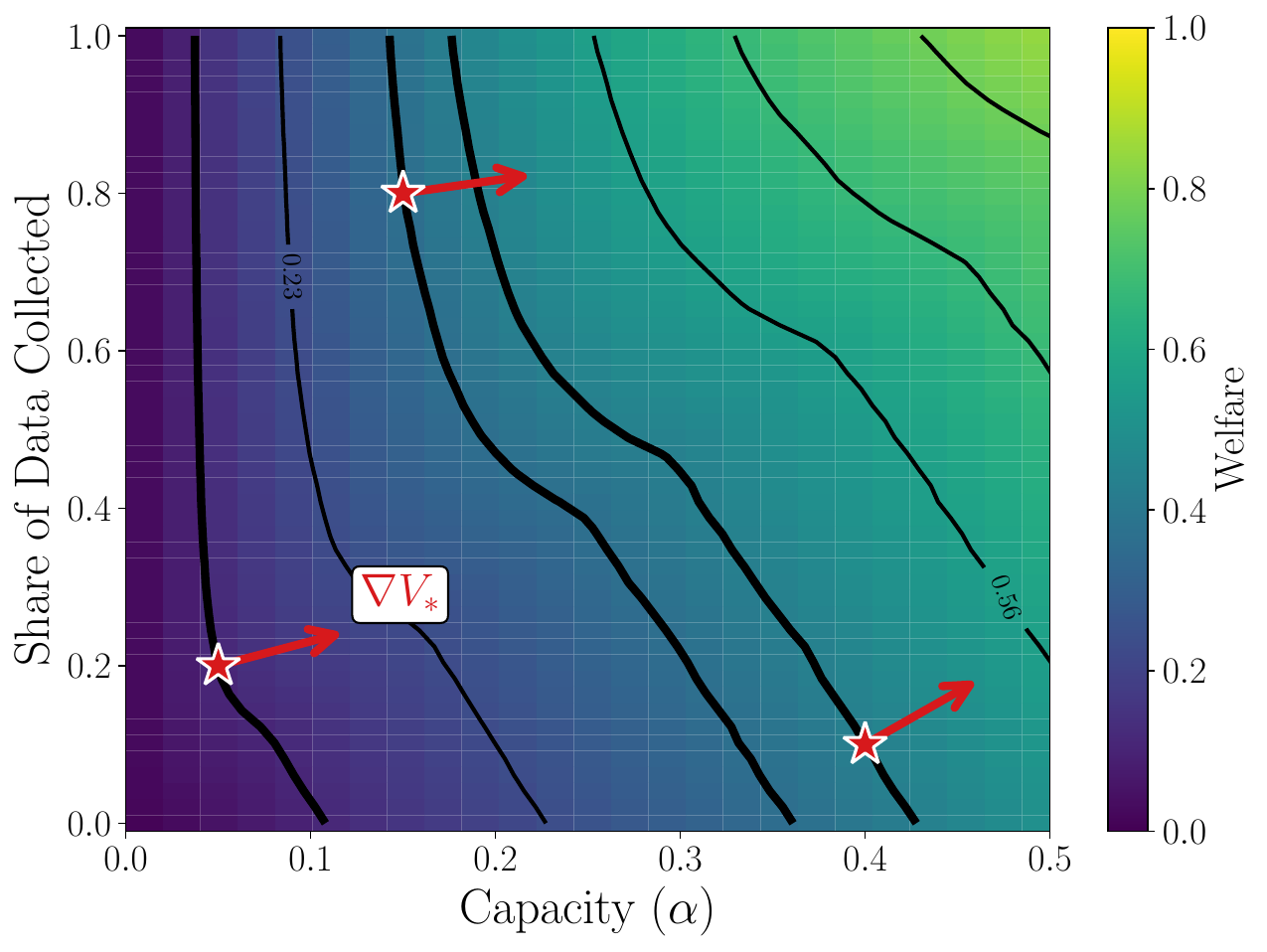}}
\subfigure[Local Optimality \ref{q2}\label{fig:step-b}]
{\includegraphics[width=0.32\textwidth]{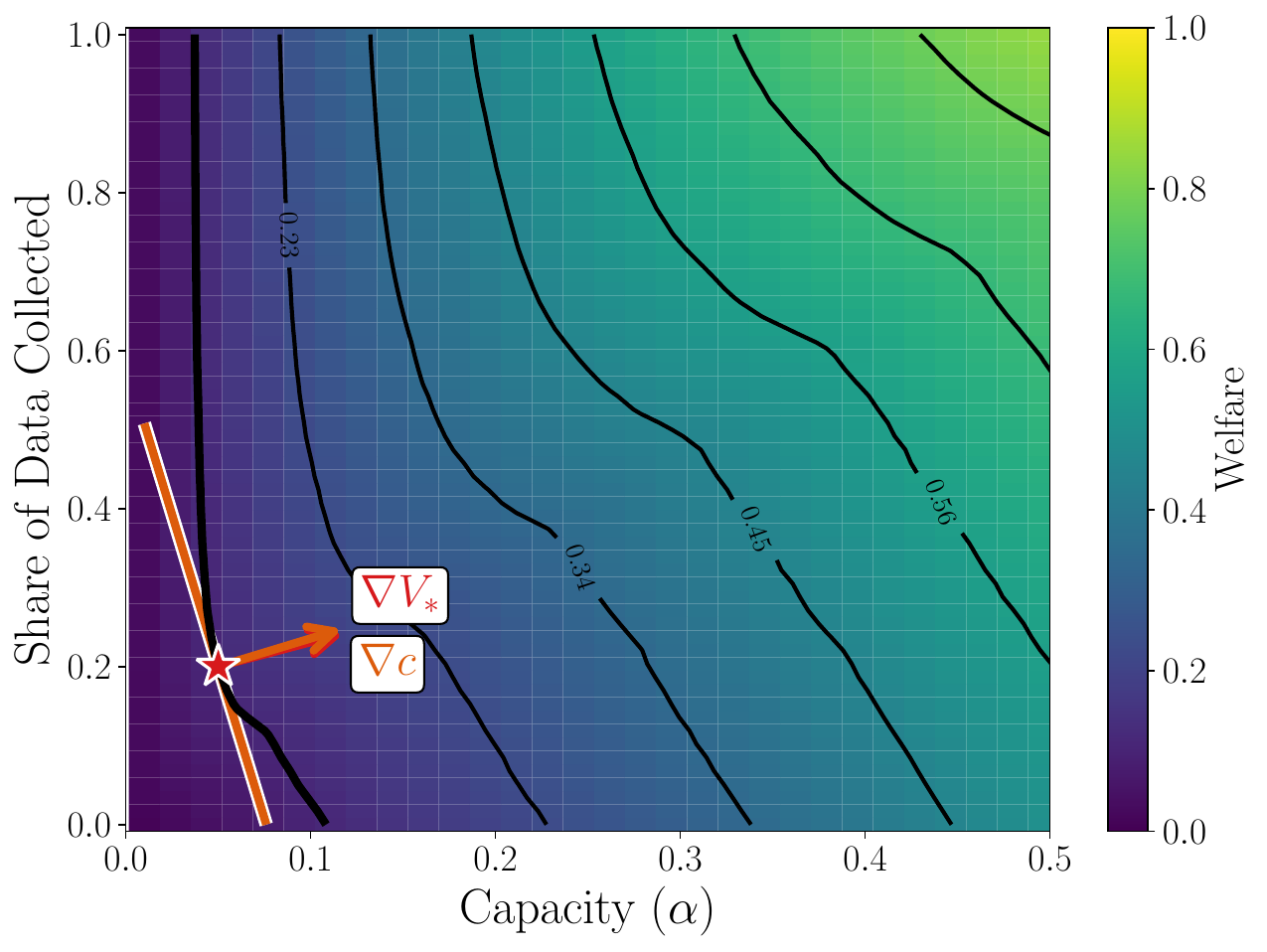}}
\subfigure[Expansion Path \ref{q3} \label{fig:step-c}]
{\includegraphics[width=0.32\textwidth]{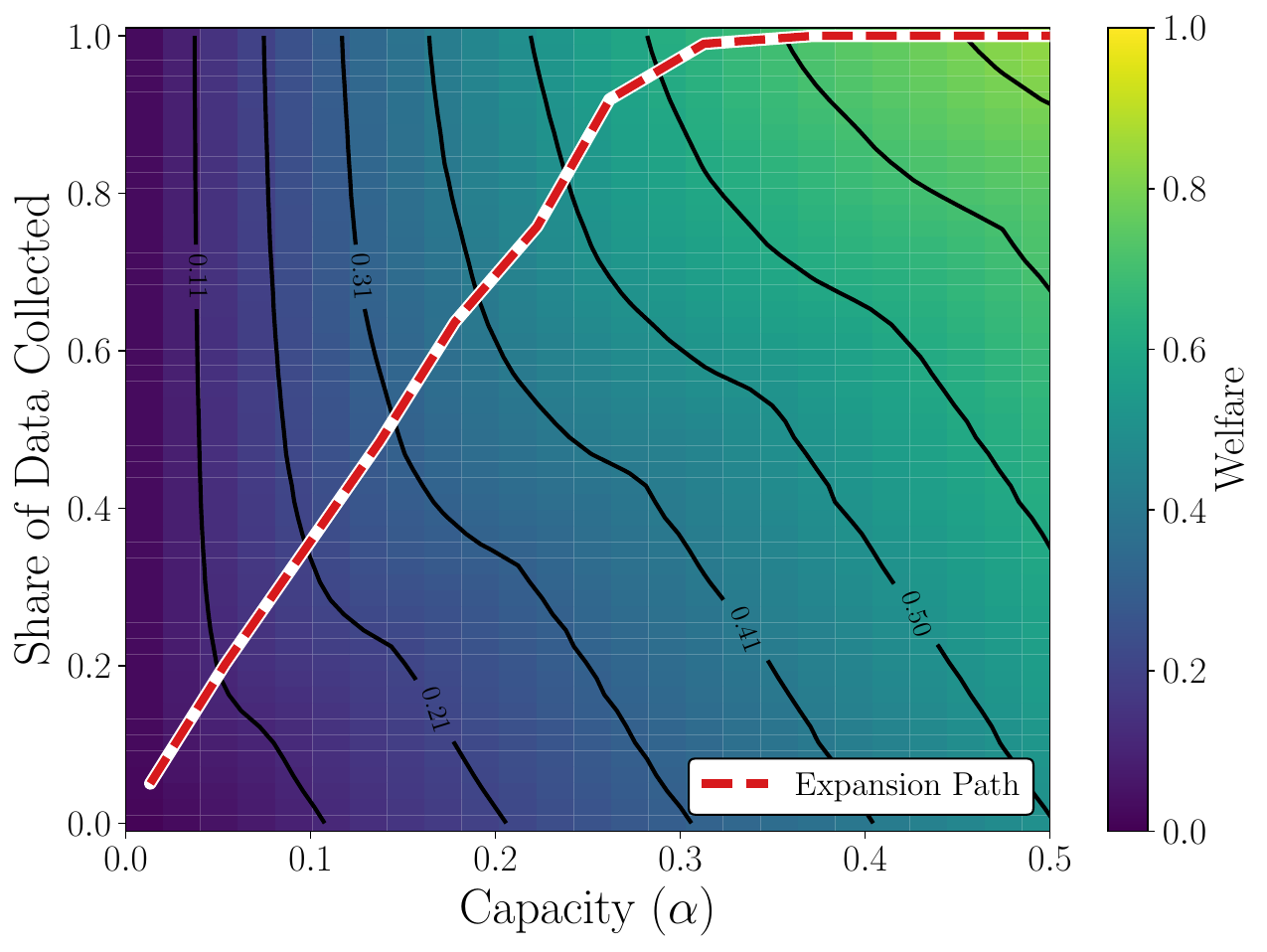}}
\caption{\textbf{Case Study 1.} Welfare surface for a hypothetical poverty targeting program in Ethiopia across the share of data collected and the capacity of the program. Welfare measured as the number of correctly identified individuals below the poverty line (33\%). Averaged over $100$ cross validation splits. Data collection refers to scorecard surveys at a median cost of \$13 USD PPP per household; capacity investment expands the share of households receiving transfers of \$100 USD PPP.}
\label{fig:step}
\end{figure}

\subsubsection{Case Study 1: Identifying Poor Households}
A standard objective is to maximize the number of poor households reached by the program,
\begin{align*}
  u(y, a) = a \cdot \mathbf{1}\{y \leq \bar{y}\}  
\end{align*}
where $\bar{y}$ denotes the poverty line and $a \in \{0,1\}$ denotes the targeting decision. Using the international poverty line of \$3 per day, roughly 33\% of households fell below this threshold in 2015 \citep{worldbank2026pip}. For now, we assume a benefit size of \$100 USD PPP (purchasing power parity), corresponding to an 8\% increase in yearly consumption for the average household in 2015.

\textbf{\ref{q1}} Inspecting the local rate of improvement for this design universe (Figure~\ref{fig:step-a}), we find that a small improvement in capacity tends to far outweigh the impact of a small increase in the share of data collected. For a planner deciding where to invest at the margin, welfare is far more responsive to capacity expansion than to data collection in this setting. \textbf{\ref{q2}}  To decide whether a program configuration is optimal, we also need to account for costs. Given the survey cost of \$13 PPP per household and a transfer size of \$100 PPP \citep{aiken2023moving}, capacity expansion is roughly eight times more expensive per household than data collection. The natural question is whether a typical anti-poverty program --- calibrated here at 20\% data coverage \citep{grosh2022revisiting} and 5\% capacity\footnote{Ethiopia's overall social protection coverage around 2015 was 21\% \cite{worldbank_aspire}, but the coverage of one cash transfer program is likely to be substantially below this overall coverage level.} --- strikes the right balance between the two. We find that it does, meaning the higher marginal welfare of capacity is exactly offset by its higher cost (Figure~\ref{fig:step-b}), meaning that the planner is spending the budget cost-efficiently. \textbf{\ref{q3}} We then optimize over the design space with budgets up to \$2.7 billion PPP per year\footnote{This maximum budget corresponds to 100\% capacity for a $\$100$ per year cash transfer program covering Ethiopia's 27.3 million households.}, tracing the expansion path that characterizes the optimal budget allocation. We find that data collection and capacity expansion should go hand in hand as the budget grows.

We find that while welfare is far more responsive to capacity than to data collection, the cost structure implies that both investments matter when building a program. Importantly, a planner operating at small capacity should not focus solely on collecting data to optimally allocate a small transfer budget, but should also invest in expanding the program itself. As we show in the next case study, however, these conclusions can change under alternative specifications of the utility function.

\begin{figure}[t]
\centering
\subfigure[Local Improvements \ref{q1} \label{fig:crra-a}]{\includegraphics[width=0.32\textwidth]{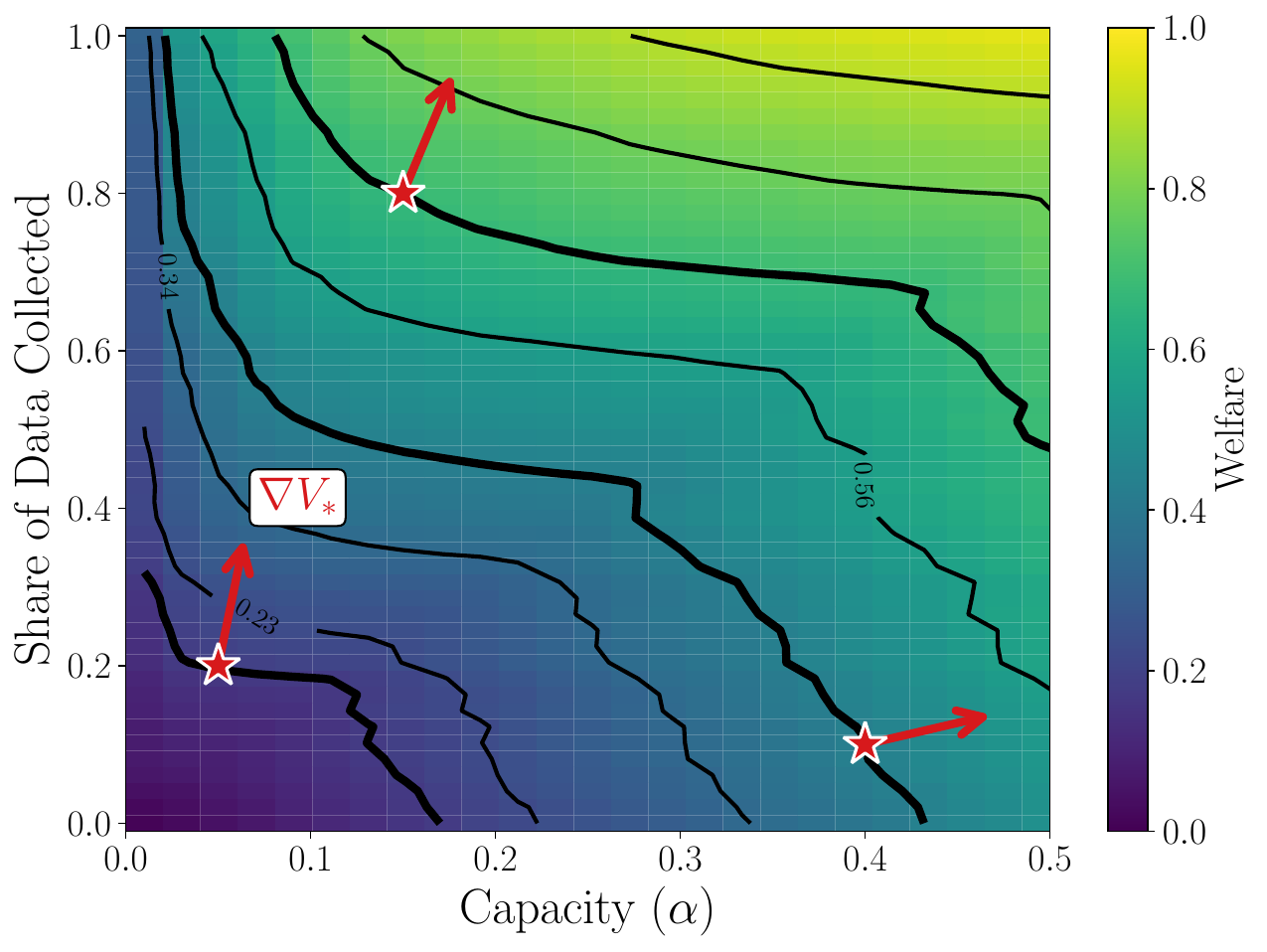}}
\subfigure[Local Optimality \ref{q2}\label{fig:crra-b}]
{\includegraphics[width=0.32\textwidth]{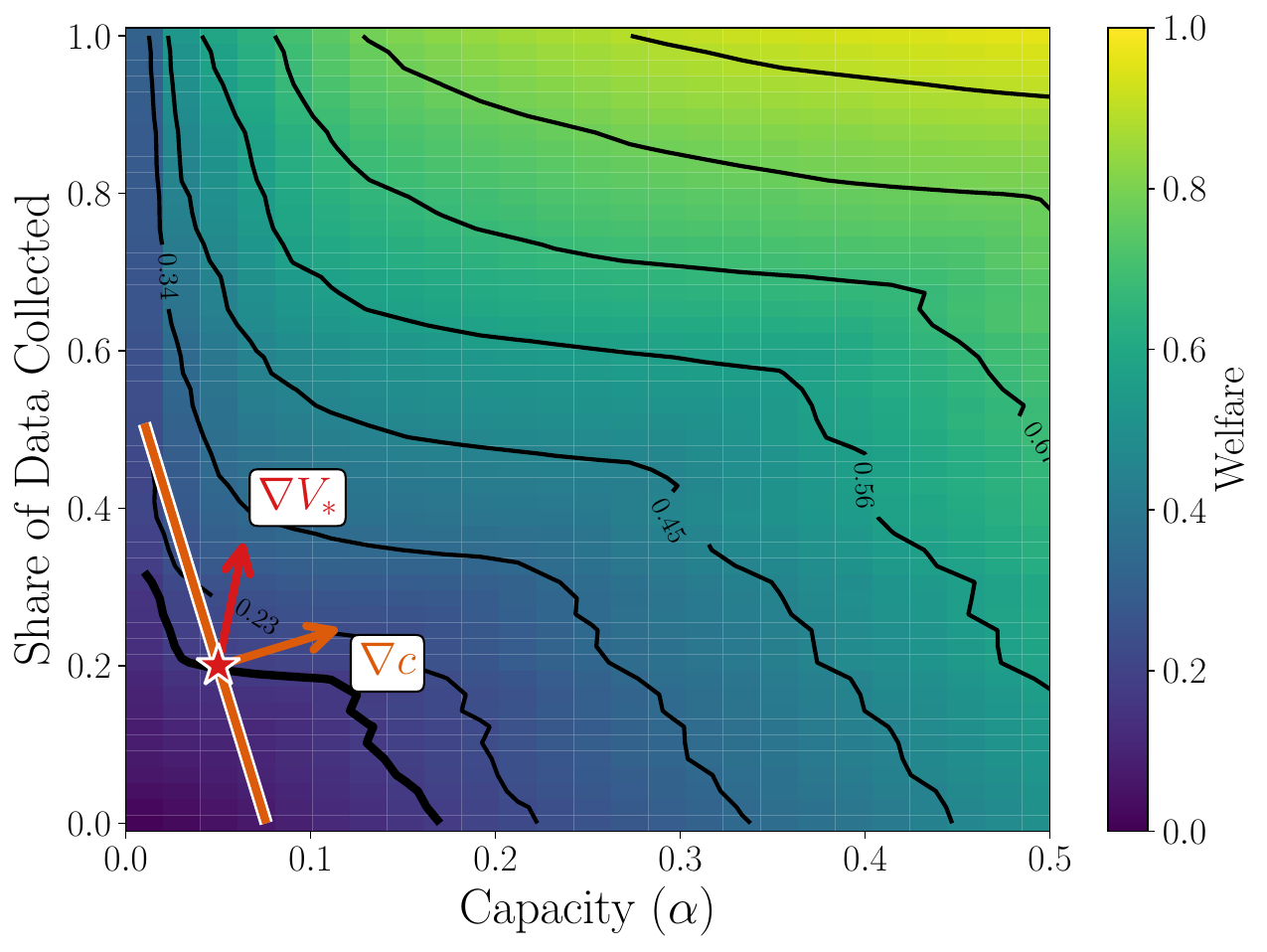}}
\subfigure[Expansion Path \ref{q3} \label{fig:crra-c}]
{\includegraphics[width=0.32\textwidth]{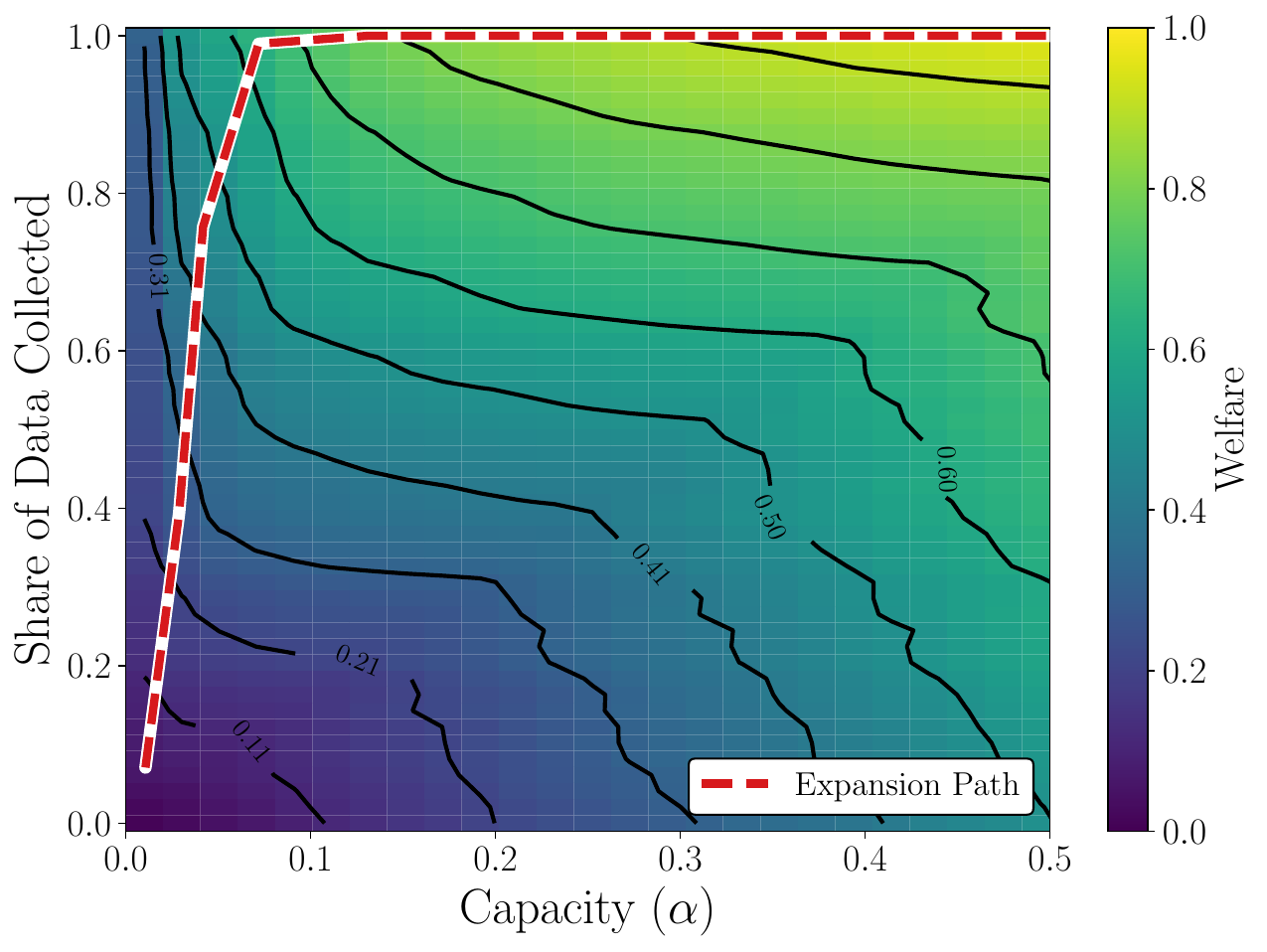}}
\caption{\textbf{Case Study 2.} Welfare surface for a hypothetical poverty targeting program in Ethiopia across the share of data collected and the capacity of the program. Welfare measured as gains under CRRA utility. Data collection refers to scorecard surveys at a cost of \$13 USD PPP per household; capacity investment expands the share of households receiving transfers of \$100 USD PPP.}
\label{fig:crra}
\end{figure}

\subsubsection{Case Study 2: CRRA Utility}
While evaluating the number of correctly identified poor households is common practice in poverty targeting, it assumes that prioritizing poorer households over less poor ones has no additional benefit, and that providing transfers to households above the poverty line has no benefit at all. The CRRA utility function \citep{hanna2018universal} is standard in development economics, and challenges this assumption by modeling diminishing marginal impacts of transfers as a function of household welfare, 
\begin{align*}
    u_b(w,a)
 = \frac{(w+ba)^{1-\rho}-w^{1-\rho}}{1-\rho}, \quad\rho>0, \; \rho\neq1
\end{align*}
where $b$ is the size of the cash transfer and higher $\rho$ implies larger relative gains from targeting poorer households (see Figure~\ref{fig:crra-utility-app} in Appendix). Following \citet{hanna2018universal},  we use a value of $\rho=3$ in our empirical simulations, within the standard range 2-4 \citep{gandelman2015risk}. 

\textbf{\ref{q1}} Switching to a CRRA utility objective flips the conclusion from Case Study 1. Across larger regions of the design space, a small improvement in prediction now has a far larger impact on welfare than a small improvement in capacity (Figure~\ref{fig:crra-a}). Relative to the earlier step function, CRRA utility rewards finer distinctions among households both below and above the poverty line, making accurate targeting more valuable. \textbf{\ref{q2}} The same configuration as in Case Study 1 is no longer locally cost-efficient (Figure~\ref{fig:crra-b}). The planner could now improve welfare at the same total cost by rebalancing toward additional data collection and spending less on program capacity. \textbf{\ref{q3}} Prediction and capacity expansion no longer go hand in hand. The expansion path under the same cost structure (Figure~\ref{fig:crra-c}) shows that much of the initial budget should be spent directly on data collection.

Even though both objectives prioritize transfers to poor households, they have drastically different welfare surfaces. They flip the planner's optimal strategy from expanding capacity and data collection together to prioritizing data collection first. This points to a perhaps underappreciated lesson: choices about how to evaluate program outcomes, often made somewhat arbitrarily in practice, can imply fundamentally different design decisions and different conclusions about the value of prediction.

\begin{figure}[t]
\centering
\subfigure[Local Improvements \ref{q1} \label{fig:benefit-a}]{\includegraphics[width=0.32\textwidth]{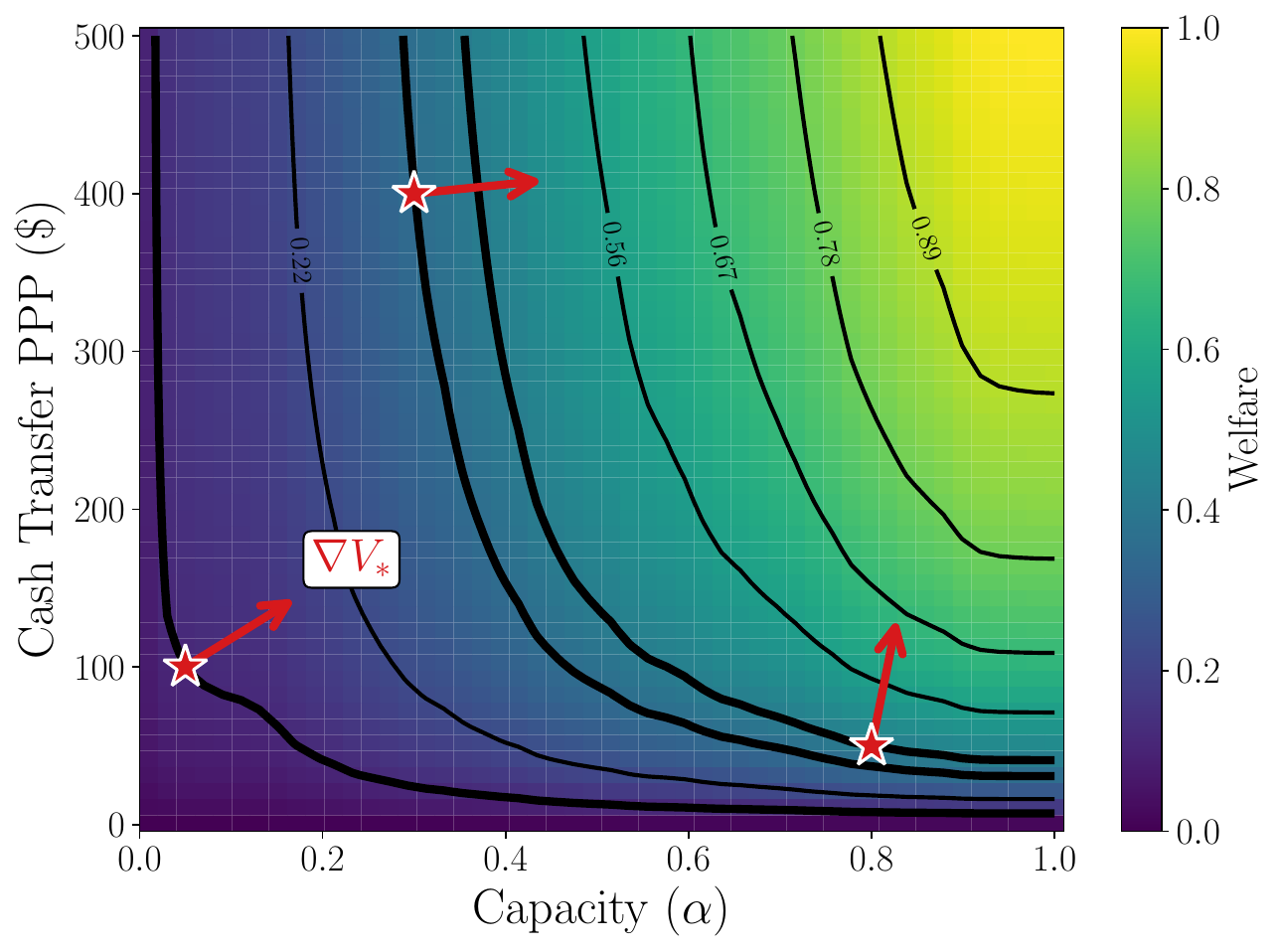}}
\subfigure[Local Optimality \ref{q2}\label{fig:benefit-b}]
{\includegraphics[width=0.32\textwidth]{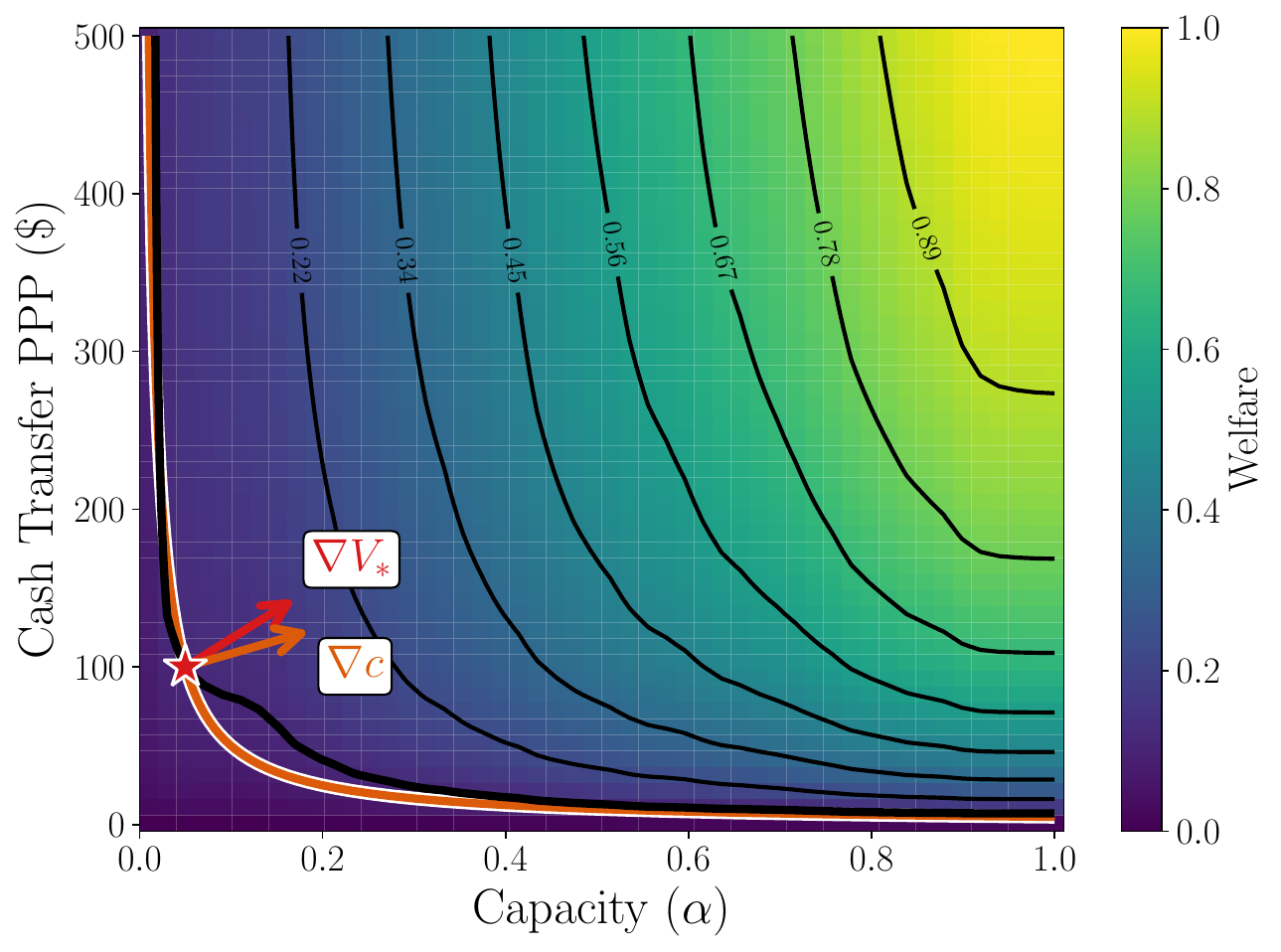}}
\subfigure[Expansion Path \ref{q3} \label{fig:benefit-c}]
{\includegraphics[width=0.32\textwidth]{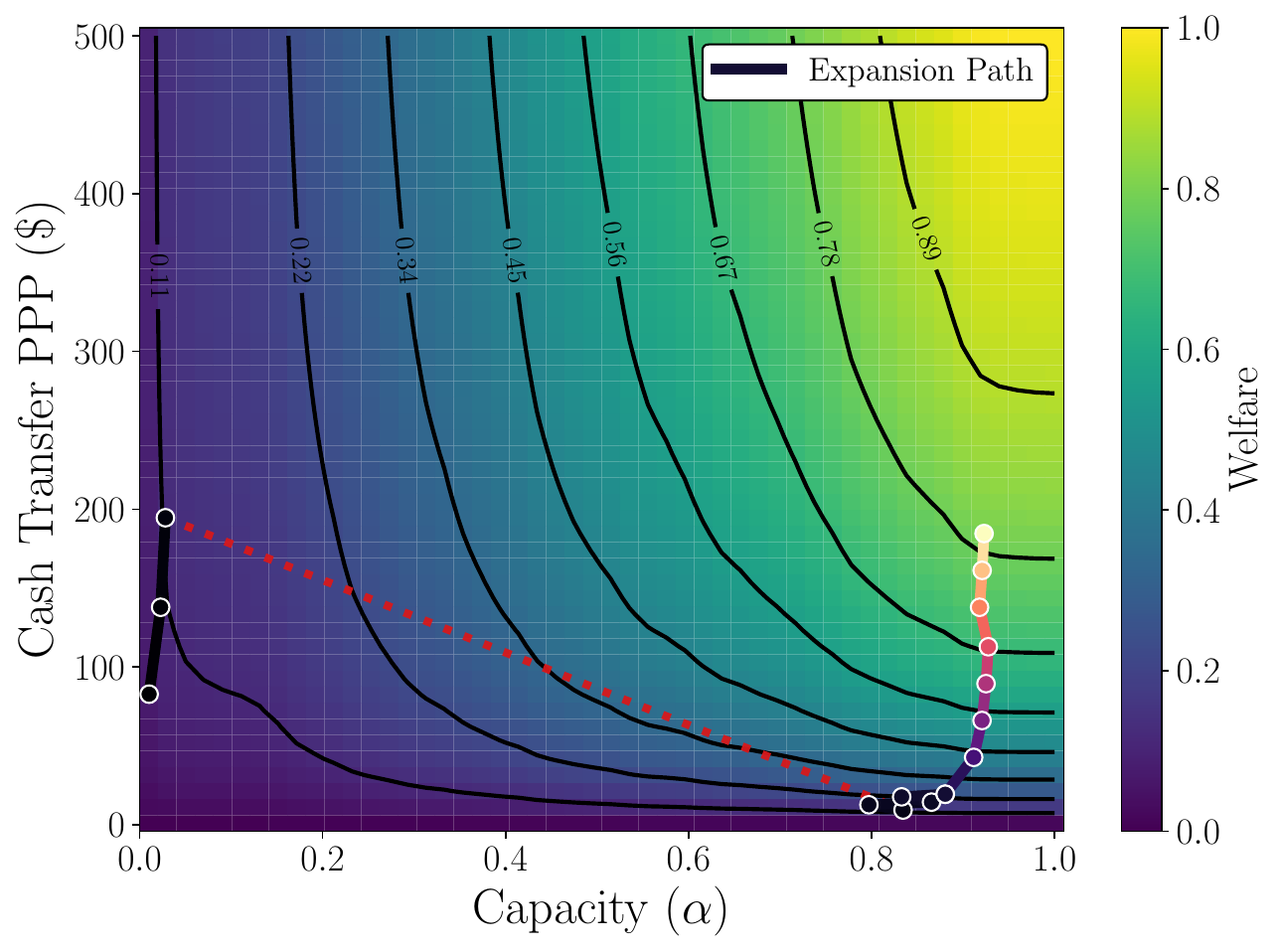}}
\caption{\textbf{Case Study 3.} Welfare surface for a hypothetical poverty targeting program in Ethiopia across transfer size and the capacity of the program. Welfare measured as gains under CRRA utility. Data collection is held fixed  at $20\%$. The dotted red line in (c) indicates a discontinuous jump in the expansion path.}
\label{fig:benefit}
\end{figure}

\subsubsection{Case Study 3: Varying Transfer Sizes}

So far we fixed the transfer size at \$100 PPP per beneficiary. But a planner may also face a choice between writing a small number of large checks or a large number of small checks. We consider trade-offs between the intensive margin (transfer amount $b$) and the extensive margin (capacity $\alpha$).

\textbf{\ref{q1}} Both capacity and transfer size can have large marginal impacts on welfare, and they are strongly complementary (Figure~\ref{fig:benefit-a} at $20\%$ data coverage). The marginal gain from increasing transfer size grows with program capacity, and vice versa. \textbf{\ref{q2}} Costs are more complex in this design space, since serving more households is more expensive when transfers are larger. This creates an interesting dynamic: in regimes where one margin has a much larger welfare impact, it also tends to be much more expensive, making the tradeoff far less obvious than it might first appear. At the typical program configuration, for instance, the planner could improve welfare by targeting a smaller population with larger transfers (Figure~\ref{fig:benefit-b}). \textbf{\ref{q3}} Overall, the expansion path has a discontinuity; the planner should switch between two qualitatively different strategies. At lower budgets, they should favor larger transfers to a smaller group of recipients, while at higher budgets, it becomes more valuable to provide smaller transfers to a broader share of the population. 

As we saw, even a seemingly simple addition to the design space can lead to a non-trivial finding; the planner should qualitatively switch how transfers are distributed as the budget grows. But expanding the design space further to include all three levers changes the answer again, with the discontinuity disappearing and the expansion path more smoothly balancing capacity and transfer size (Figure~\ref{fig:benefit-capacity-label} in the Appendix). Our tools enable planners to reliably make these decisions in their own domains.

\begin{figure}[t]
\centering
\subfigure[Local Improvements \ref{q1} \label{fig:emp-a}]{\includegraphics[width=0.32\textwidth]{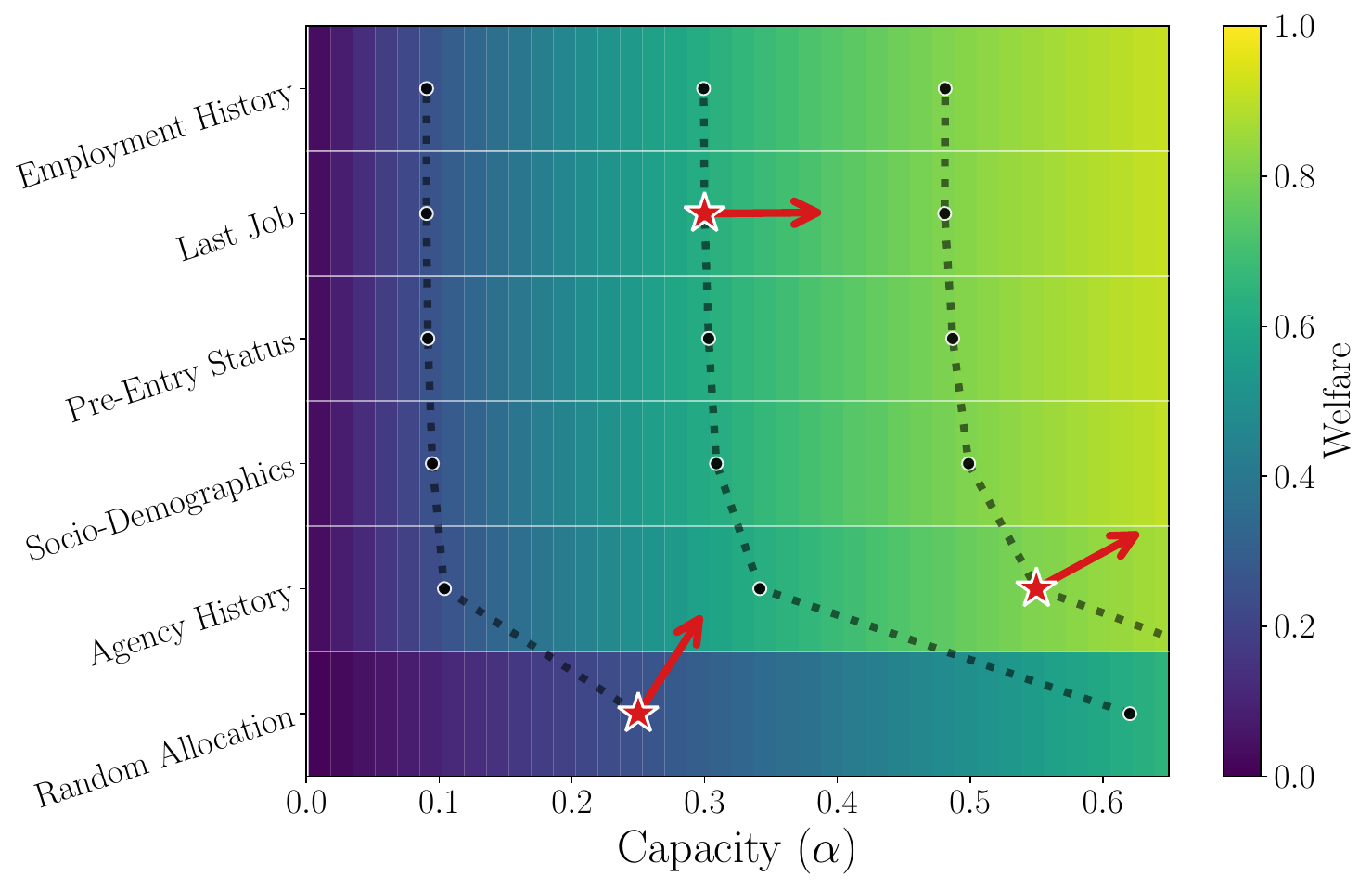}}
\subfigure[Local Optimality \ref{q2}\label{fig:emp-b}]
{\includegraphics[width=0.32\textwidth]{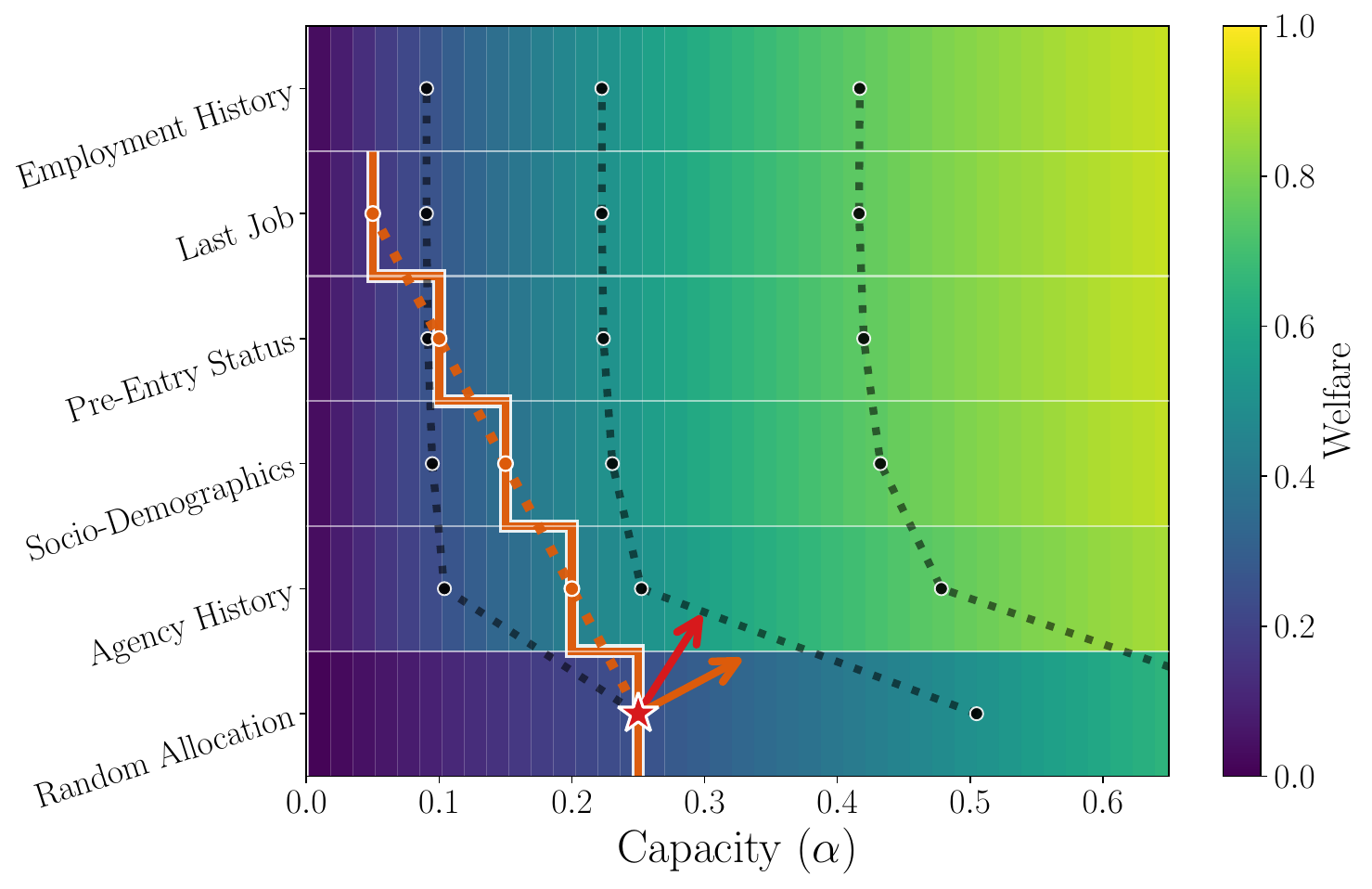}}
\subfigure[Expansion Path \ref{q3} \label{fig:emp-c}]
{\includegraphics[width=0.32\textwidth]{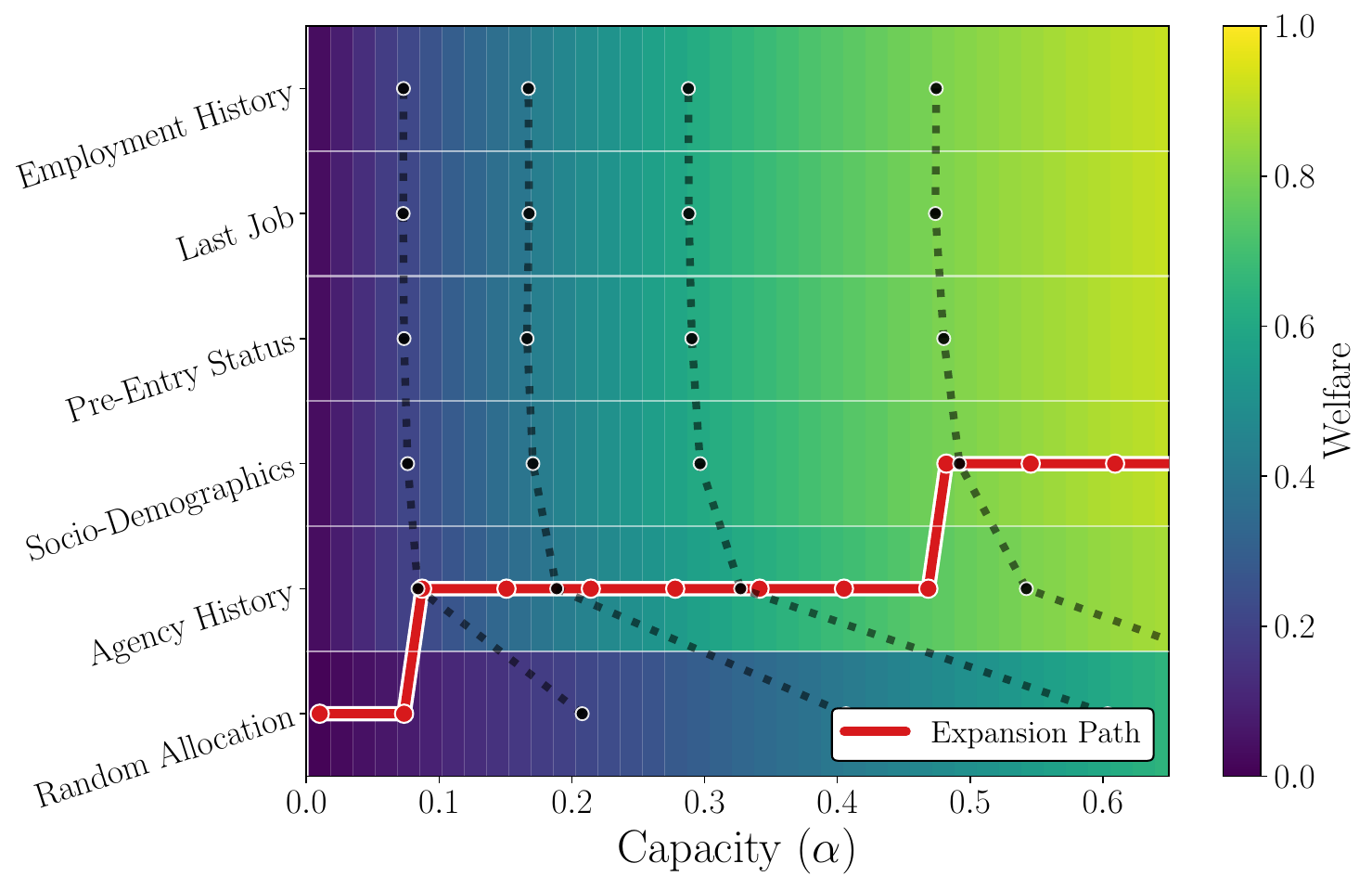}}
\caption{\textbf{Case Study 4.} Welfare surface for a hypothetical job seeker profiling system. The ordering along the feature axis is determined by greedy forward selection, cumulatively adding the group with the largest marginal welfare gain at each step. At each configuration, arrows encode the capacity increase that would be required to achieve the same welfare gain as adding the next feature group.}
\label{fig:employ}
\end{figure}

\subsection{Profiling of Job seekers in Germany}
\label{sec: employment}

\begin{figure}[t]
\centering
\includegraphics[width=0.5\textwidth]{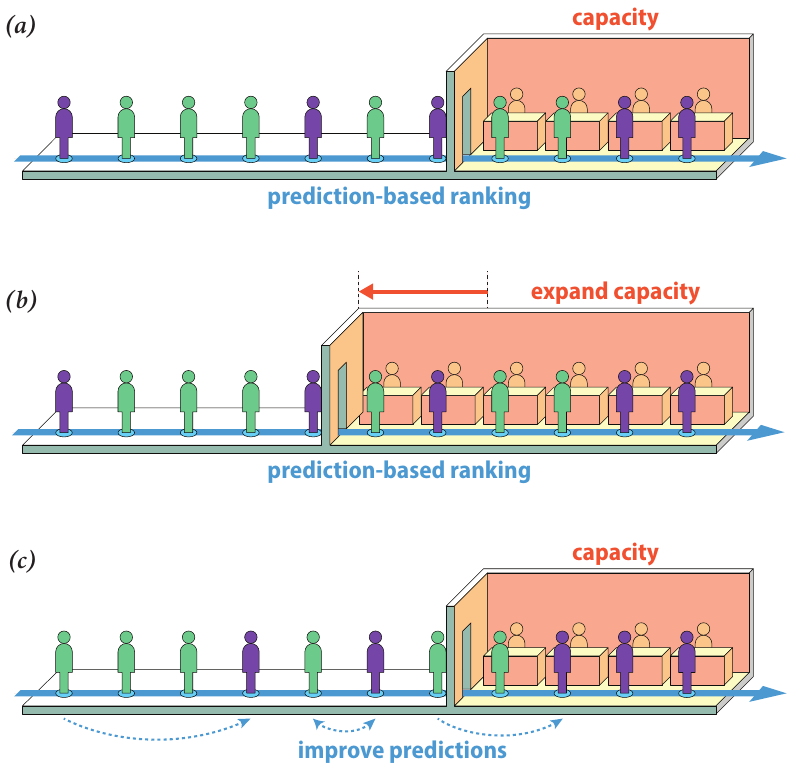}
\caption{\textbf{(a)} Illustrates the task of identifying jobseekers at risk of long-term unemployment (purple) under a fixed capacity constraint. The employment office observes only imperfect predictions of risk, so the ranking of individuals is uncertain. \textbf{(b)} Increasing capacity expands the set of individuals who can be served. \textbf{(c)} Improving prediction sharpens the ranking, allowing limited resources to be targeted more effectively toward those truly at risk.}
\label{fig:motivation}
\end{figure}

Public Employment Services (PES) across countries deploy predictive models to support one of their most critical tasks: the allocation of support programs to job seekers at risk of long-term unemployment (LTU) \citep{kortner2023predictive, desiere2019statistical}. LTU incurs considerable costs for public welfare, and the support programs allocated to at-risk job seekers account for large shares of PES spending. While prediction-based profiling of job seekers is a topic of substantial policy relevance, to our knowledge the meta-design of these systems has not been systematically analyzed before. See Appendix~\ref{app: emply data} for additional information.

\paragraph{Allocation Problem.} Profiling tools are commonly used to predict which job seekers are at risk of long-term unemployment (12+ months in Germany, \citep{bach2023impact}) based on pre-unemployment covariates $\mathcal{X}$ such as employment and benefits history, corresponding to the utility $u(y, a) = a \cdot \mathbf{1}\{y \geq t_\beta\}$ from Example~\ref{worst-off-example}. Those with the highest predicted risk are prioritized for support, i.e., $\pi(x) = \mathbf{1}\{f(x) \geq t_\alpha\}$. To evaluate a hypothetical profiling system, we secured access to a 2\% random sample of German administrative labor market records, comprising over 60 million entries of employment spells from all residents of Germany from 1970 to 2021.

\paragraph{Design Space.} From the perspective of a public employment office, we consider two design axes. The first is institutional capacity: how many job seekers can be prioritized for additional caseworker attention. The second is the quality of predictions used for targeting. In employment services, where agencies already operate on large administrative datasets, improving prediction is typically not a matter of sample size but of collecting richer information on job seekers \citep{desiere2019statistical}. The dataset used in our experiments was itself constructed by linking administrative sources from the German employment system, but is not yet operationalized in agency workflows. Whether making such data available for profiling would justify the cost is a natural question; we analyze the welfare side of this tradeoff.

\subsubsection{Case Study 4: Identifying Long-Term Unemployment}
To operationalize the feature axis, we group available features into five categories (see Table~\ref{tab:feature-groups}). Each group contains features that broadly share a common data source and collection effort, ranging from basic socio-demographic information to longitudinal employment and agency records that require linking across administrative registries. 

\textbf{\ref{q1}} We find that most of the welfare gain from prediction comes from information on past interactions with the employment agency (Figure~\ref{fig:emp-a}). Additional feature groups contribute comparatively little, suggesting that a decision rule based on agency history alone may suffice. \textbf{\ref{q2}} In contrast to poverty targeting, exact costs are harder to pin down, as the cost of collecting or linking features depends on institutional specifics. For illustration, we assume each feature group costs the same as expanding capacity by five percentage points (Figure~\ref{fig:emp-b}). Under these costs, investing in the first feature group is clearly worthwhile at low capacity, but once the most informative groups are in place, cost-adjusted returns to further data collection are small. The welfare surface provides the basis for such comparisons once institutional costs are known. \textbf{\ref{q3}} Tracing an expansion path under this cost structure (Figure~\ref{fig:emp-c}), the optimal strategy invests early in features to avoid random allocation, then shifts toward expanding capacity.

Taken together, these results suggest that across the administrative data sources available to the agency, a (simple) decision rule based on past unemployment history captures nearly all the welfare-relevant signal. Whether qualitatively different data sources, such as structured interviews or questionnaires with job seekers \citep{desiere2019statistical}, could change this picture is something the same analysis could directly evaluate once such data becomes available.

\section*{Acknowledgements}
UFA acknowledges the support by the DAAD programme Konrad Zuse Schools of Excellence in Artificial Intelligence, sponsored by the Federal Ministry of Education. We are thankful to Chris Hays and Sendhil Mullainathan for insightful comments and discussions. We thank Nanina Föhr for helpful feedback and support with the design of the paper's illustrations. We gratefully acknowledge the Social Sciences Computing Facility (SSCF) at UC San Diego for providing computational resources. This research utilized the Social Sciences Research and Development Environment (SSRDE) cluster.

\bibliography{bib}

\appendix
\clearpage

\section{Proof of Fact~\ref{lem:kkt}}

\begin{proof}
Write the constraints in the form
\begin{align*}
c(\theta)-B\leq 0,\qquad \theta_i-1\leq 0,\qquad -\theta_i\leq 0.
\end{align*}
The full Lagrangian is
\begin{align*}
\widetilde{\mathcal{L}}(\theta,\lambda,\mu^+,\mu^-) = V_*(\theta)-\lambda\bigl(c(\theta)-B\bigr)-\sum_{i=1}^d \mu_i^+(\theta_i-1)-\sum_{i=1}^d \mu_i^-(-\theta_i),
\end{align*}
where $\lambda\geq0$ is the multiplier for the budget constraint, $\mu_i^+\geq0$ is the multiplier for the upper bound $\theta_i\leq1$, and $\mu_i^-\geq0$ is the multiplier for the lower bound $\theta_i\geq0$. By the KKT conditions, the multipliers satisfy complementary slackness:
\begin{align*}
\lambda\bigl(c(\theta^*)-B\bigr)&=0,\\
\mu_i^+(\theta_i^*-1)&=0 &&\text{for each } i,\\
\mu_i^- \theta_i^*&=0 &&\text{for each } i.
\end{align*}
They also satisfy stationarity:
\begin{align*}
\frac{\partial V_*}{\partial \theta_i}(\theta^*)-\lambda\frac{\partial c}{\partial \theta_i}(\theta^*)-\mu_i^+ +\mu_i^-= 0
\qquad\text{for each } i.
\end{align*}
Now fix a coordinate $i$. If $0<\theta_i^*<1$, then complementary slackness implies $\mu_i^+=\mu_i^-=0$, and stationarity gives
\begin{align*}
\frac{\partial V_*}{\partial \theta_i}(\theta^*)= \lambda\frac{\partial c}{\partial \theta_i}(\theta^*).
\end{align*}
If $\theta_i^*=1$, then complementary slackness implies $\mu_i^-=0$, and stationarity gives
\begin{align*}
\frac{\partial V_*}{\partial \theta_i}(\theta^*)-\lambda\frac{\partial c}{\partial \theta_i}(\theta^*)=\mu_i^+\geq0.
\end{align*}
Hence
\begin{align*}
\frac{\partial V_*}{\partial \theta_i}(\theta^*)\geq\lambda\frac{\partial c}{\partial \theta_i}(\theta^*).
\end{align*}
If $\theta_i^*=0$, then complementary slackness implies $\mu_i^+=0$, and stationarity gives
\begin{align*}
\frac{\partial V_*}{\partial \theta_i}(\theta^*)-\lambda\frac{\partial c}{\partial \theta_i}(\theta^*)=-\mu_i^-\leq0.
\end{align*}
Hence
\begin{align*}
\frac{\partial V_*}{\partial \theta_i}(\theta^*)\leq\lambda\frac{\partial c}{\partial \theta_i}(\theta^*).
\end{align*}
Combining the three cases proves the result.
\end{proof}

\section{Poverty Targeting}
\label{sec:app poverty}

\subsection{Background} Proxy means tests are one of the most popular approaches for targeting cash transfer programs and other social protection programs in low-income contexts where comprehensive administrative data on incomes or consumption expenditures are unavailable. They are currently used to target social protection programs in at least fifty low- and middle-income countries \citep{barrientos2018social}. The accuracy of proxy means tests in practice for identifying poor households have been studied in many studies across Africa \cite{brown2018poor, schnitzer2024targeting, mcbride2018retooling}, Asia \cite{narayan2005proxy}, and Latin America \cite{noriega2020algorithmic}. Other work has assessed the fairness of PMTs relative to other targeting approaches \cite{noriega2020algorithmic}, susceptibility to intentional data misreporting and collusion \cite{banerjee2020lack}, and acceptability to data subjects \cite{premand2021efficiency, alatas2012targeting}, but little  previous work has addressed the benefit of investing in PMT coverage and accuracy in comparison relative to investment in allocating resources to the cash transfer programs and other social programs targeted with PMTs. \cite{aiken2025scalable} provide an initial framework for such cost-benefit trade-offs using data on targeting cash transfers in Bangladesh, we expand this framework in this paper to account for additional policy levers and data availability constraints. 

\subsection{Data}
The data for our simulations of cash transfer program are from Ethiopia's 2015 Living Standards Measurement Survey. The survey data covers 4,954 nationally representative households. We use data from 4,694 households that have complete information on consumption expenditures, asset possession, housing quality, and household demographics. We convert consumption expenditures data to yearly total consumption expenditures measured in USD PPP using the PPP exchange rate for 2015 from the World Bank.

\subsection{Predictive Modeling}
The target of our predictive model is log-transformed total household consumption expenditures. The features are 54 variables commonly included in proxy means tests in low-income countries, including:
\begin{itemize}

\item Asset possession, such as as whether the household owns radio, TV, satellite dish, sofa, bicycle, car, or bed
\item Housing quality, such as how many rooms the dwelling has, the construction material of the roof and floor, and the primary materials used for lighting and cooking
\item Household demographics, such as the number of members, number of children, gender and age of the household head, and education and literacy levels of the household head
\item The region of residence of the household

\end{itemize}

All continuous feature variables are winsorized at the 99th percentile and normalized to a 0-1 range. All categorical variables are one hot encoded. 

We train the predictive model on a random 75\% of households and produce predictions on the remaining 25\%. We repeat all our experiments for poverty targeting over 100 random train-test splits and report the average result. Our predictive model is a LASSO regression with regularization strength determined via three-fold cross validation. 

\begin{figure}[t]
\centering
\includegraphics[width=0.45\textwidth]{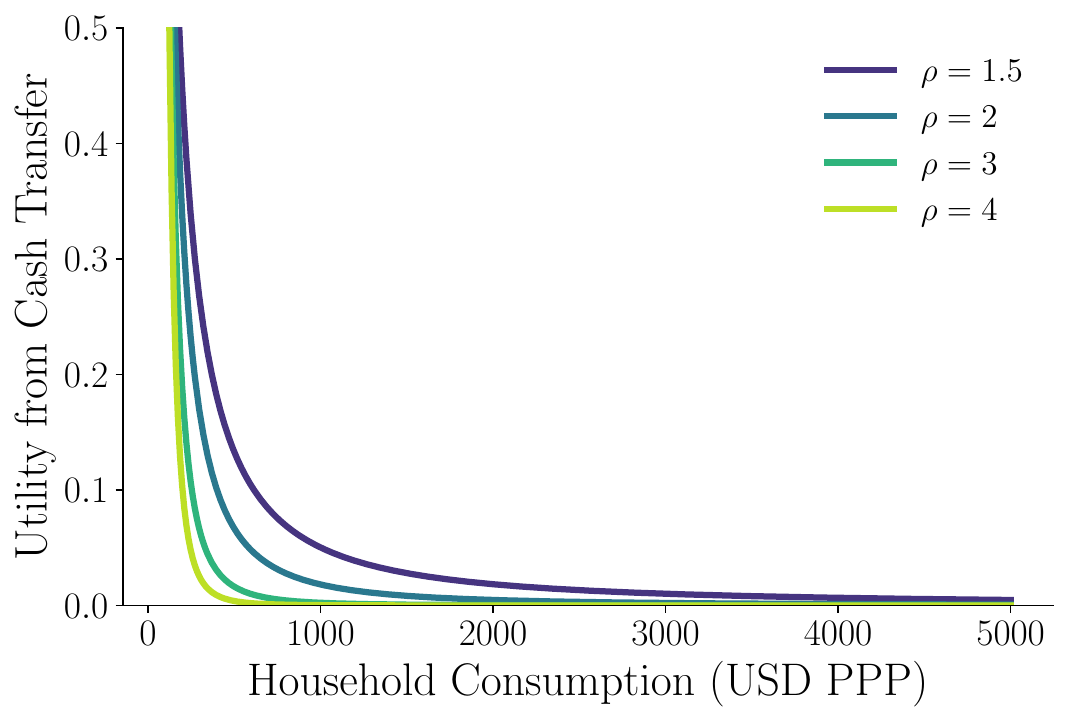}
\caption{Normalized utility gain from a fixed $\$100$ PPP transfer under CRRA utility for different values of $\rho$. Larger values of $\rho$ place relatively more weight on transfers to lower-consumption households.}
\label{fig:crra-utility-app}
\end{figure}

\subsection{Simulating test-time data availability}
\label{app: test-time}
Several of our experiments involve simulating lower test-time data availability -- scenarios where for some households in the test set feature data is not available. We simulate this by removing predictions for these households from the test set; households without predictions receive the population mean as estimate and are targeted at random. 

We also present results for an alternative implementation in which households without scorecard data are treated as ineligible for the program and placed at the end of the allocation queue.

\begin{figure}[t]
\centering
\subfigure[Local Improvements \ref{q1} ]{\includegraphics[width=0.32\textwidth]{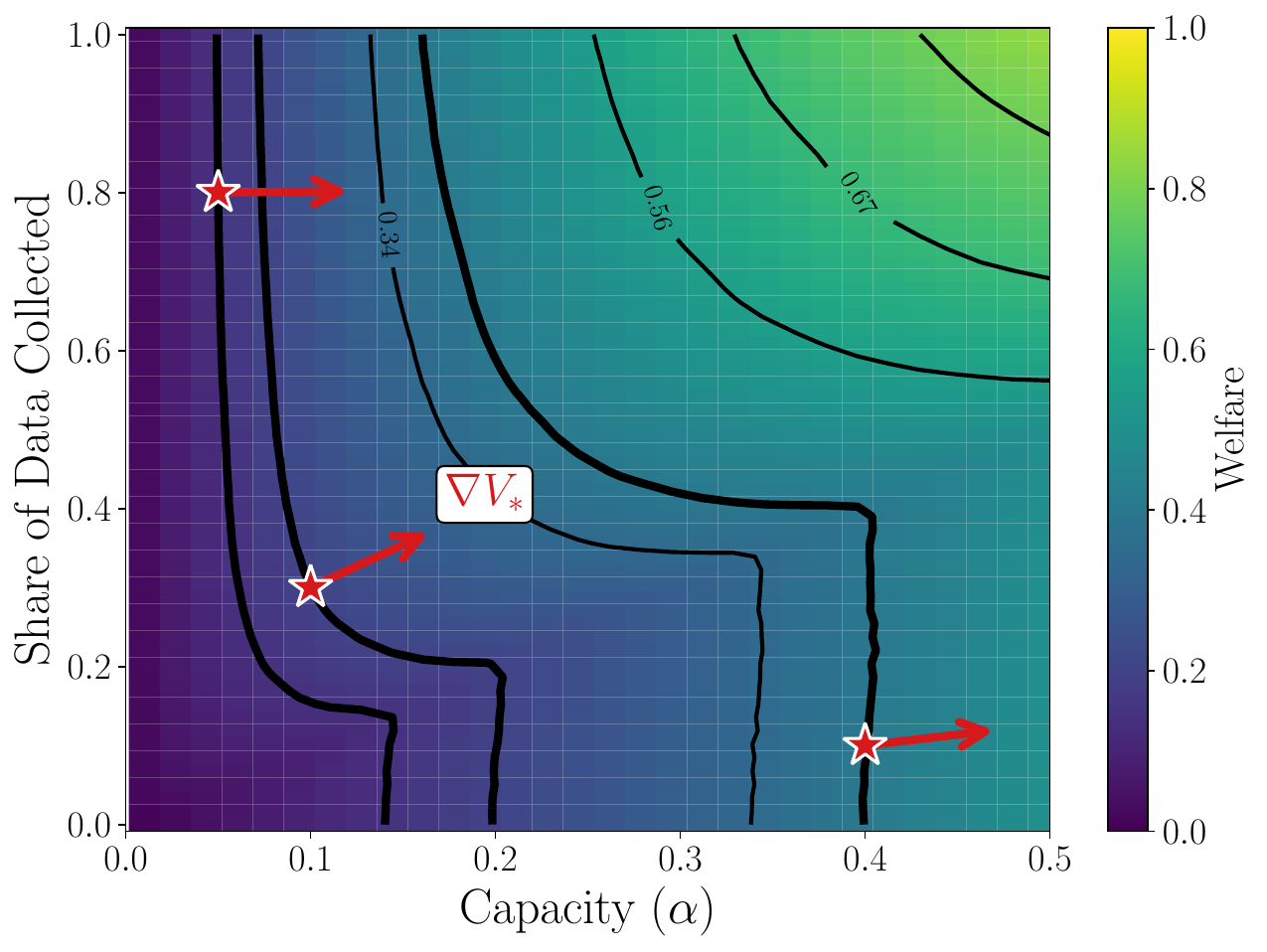}}
\subfigure[Local Optimality \ref{q2}]
{\includegraphics[width=0.32\textwidth]{figures/poverty_step_q2.pdf}}
\subfigure[Expansion Path \ref{q3}]
{\includegraphics[width=0.32\textwidth]{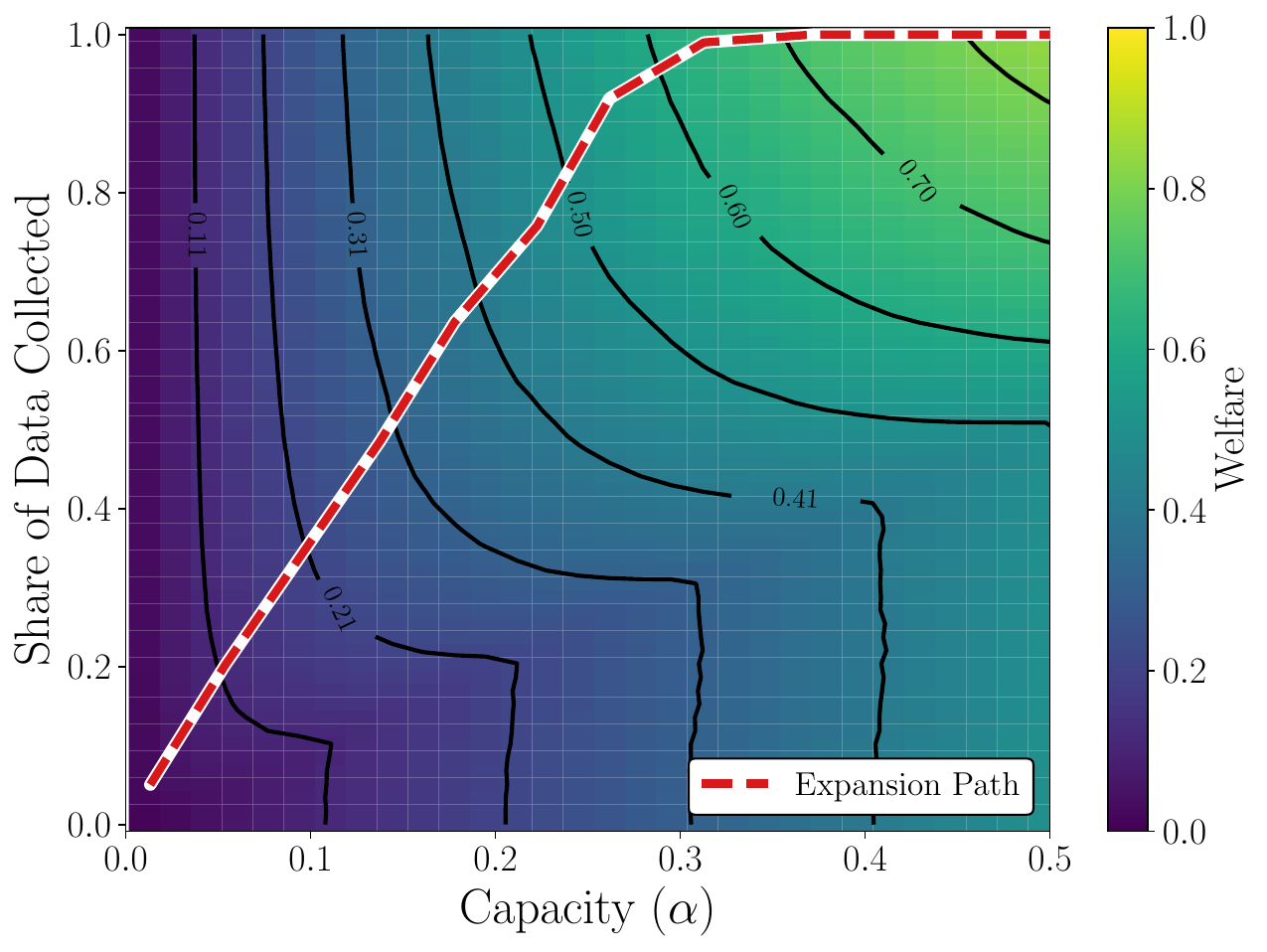}}
\caption{Same as Figure~\ref{fig:step}, but households without scorecard data are placed at the end of the allocation queue rather than assigned the population mean. See Section~\ref{app: test-time} for details.}
\end{figure}

\begin{figure}[t]
\centering
\subfigure[Local Improvements \ref{q1}]{\includegraphics[width=0.32\textwidth]{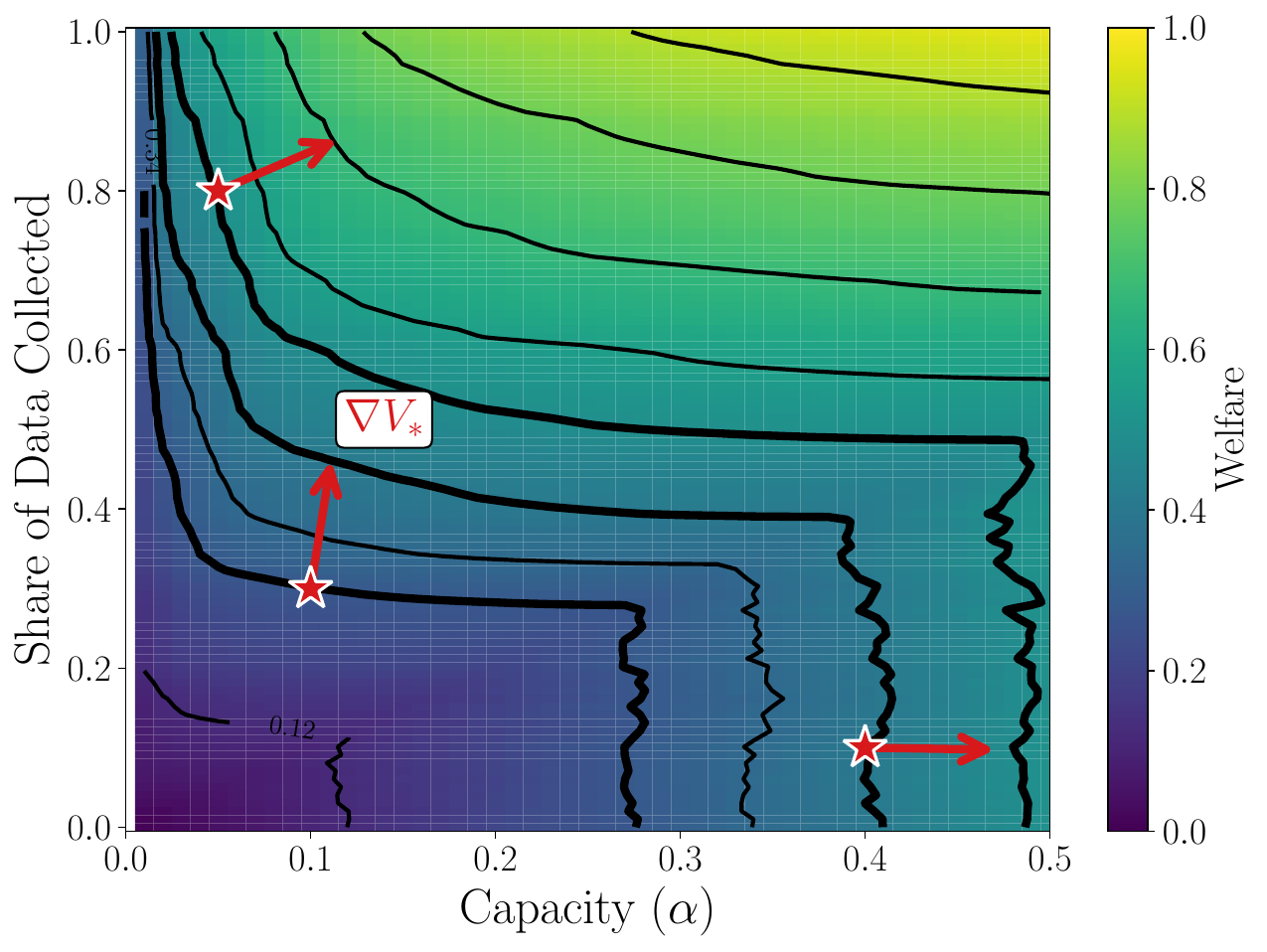}}
\subfigure[Local Optimality \ref{q2}]
{\includegraphics[width=0.32\textwidth]{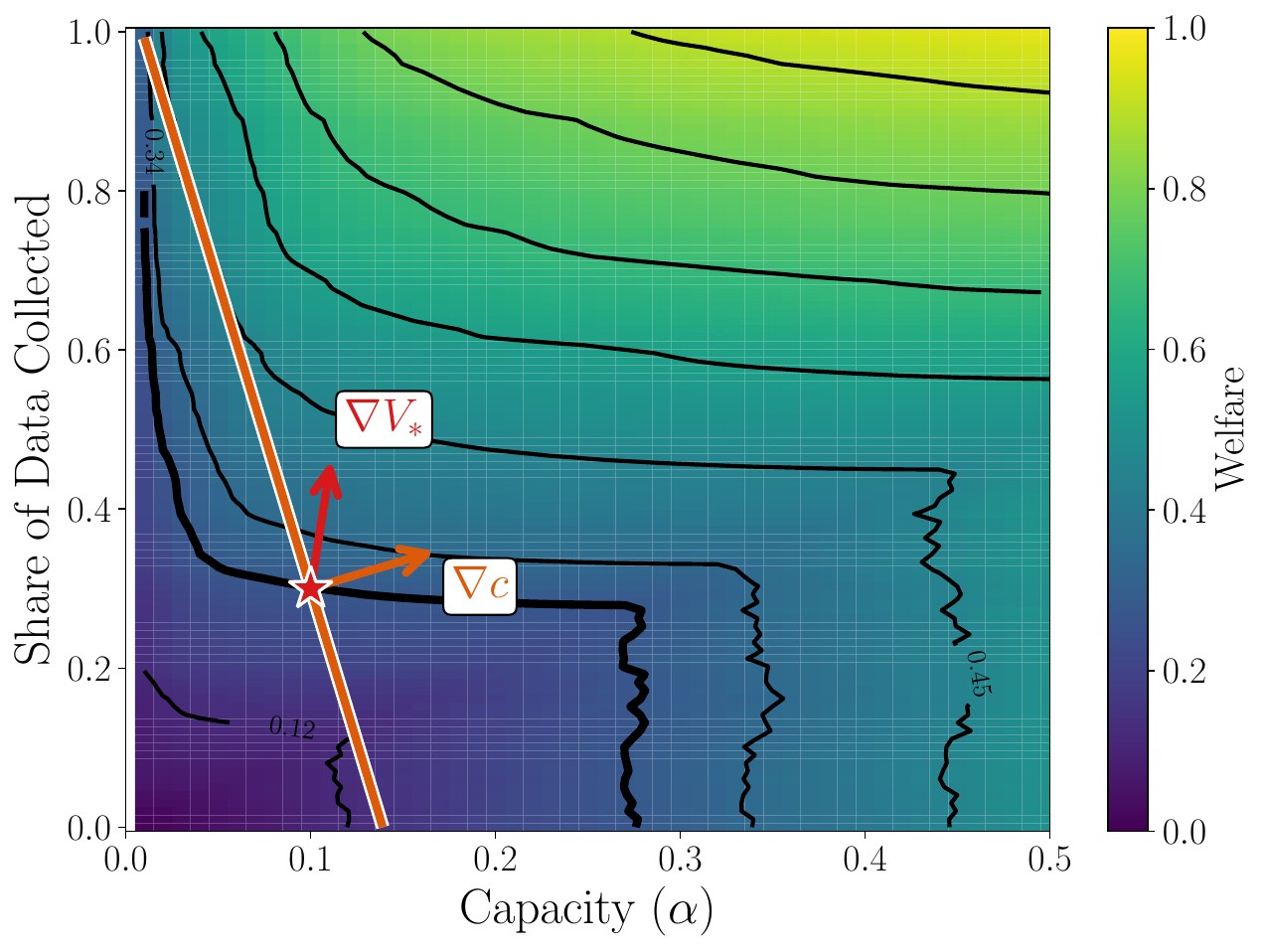}}
\subfigure[Expansion Path \ref{q3} ]
{\includegraphics[width=0.32\textwidth]{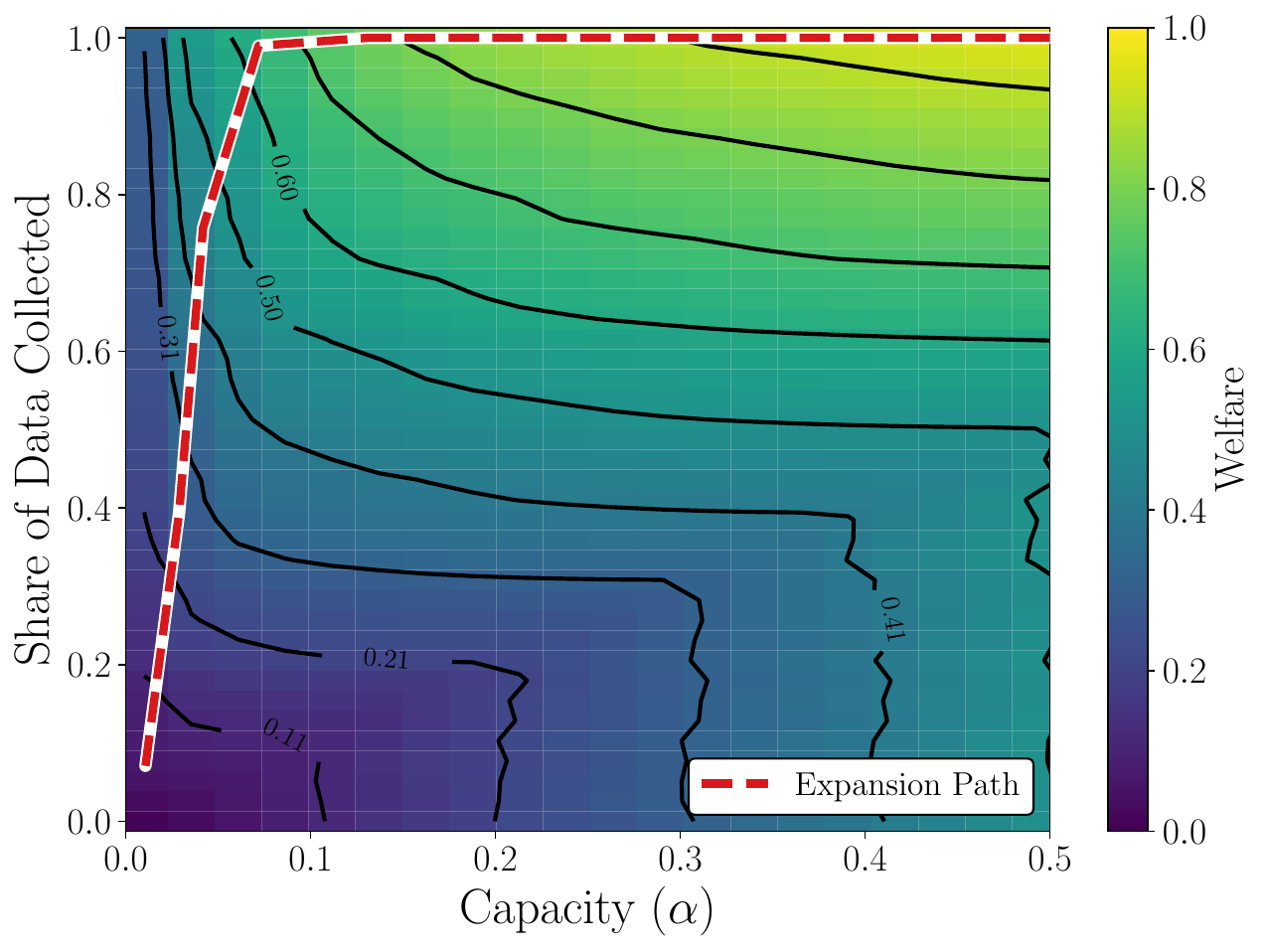}}
\caption{Same as Figure~\ref{fig:crra}, but households without scorecard data are placed at the end of the allocation queue rather than assigned the population mean. See Section~\ref{app: test-time} for details.}
\end{figure}

\begin{figure}[t]
\centering
\subfigure[Local Improvements \ref{q1}]{\includegraphics[width=0.32\textwidth]{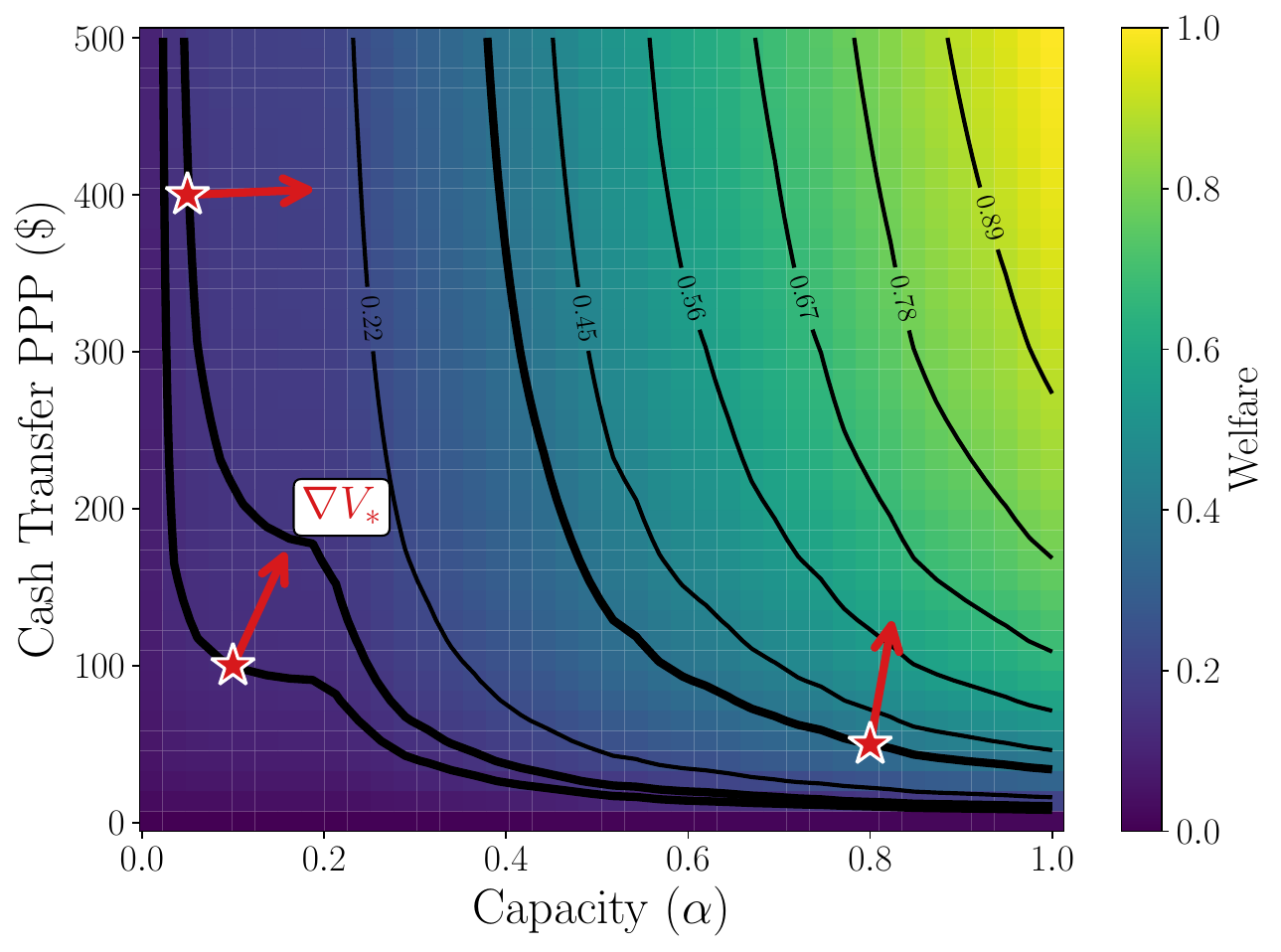}}
\subfigure[Local Optimality \ref{q2}]
{\includegraphics[width=0.32\textwidth]{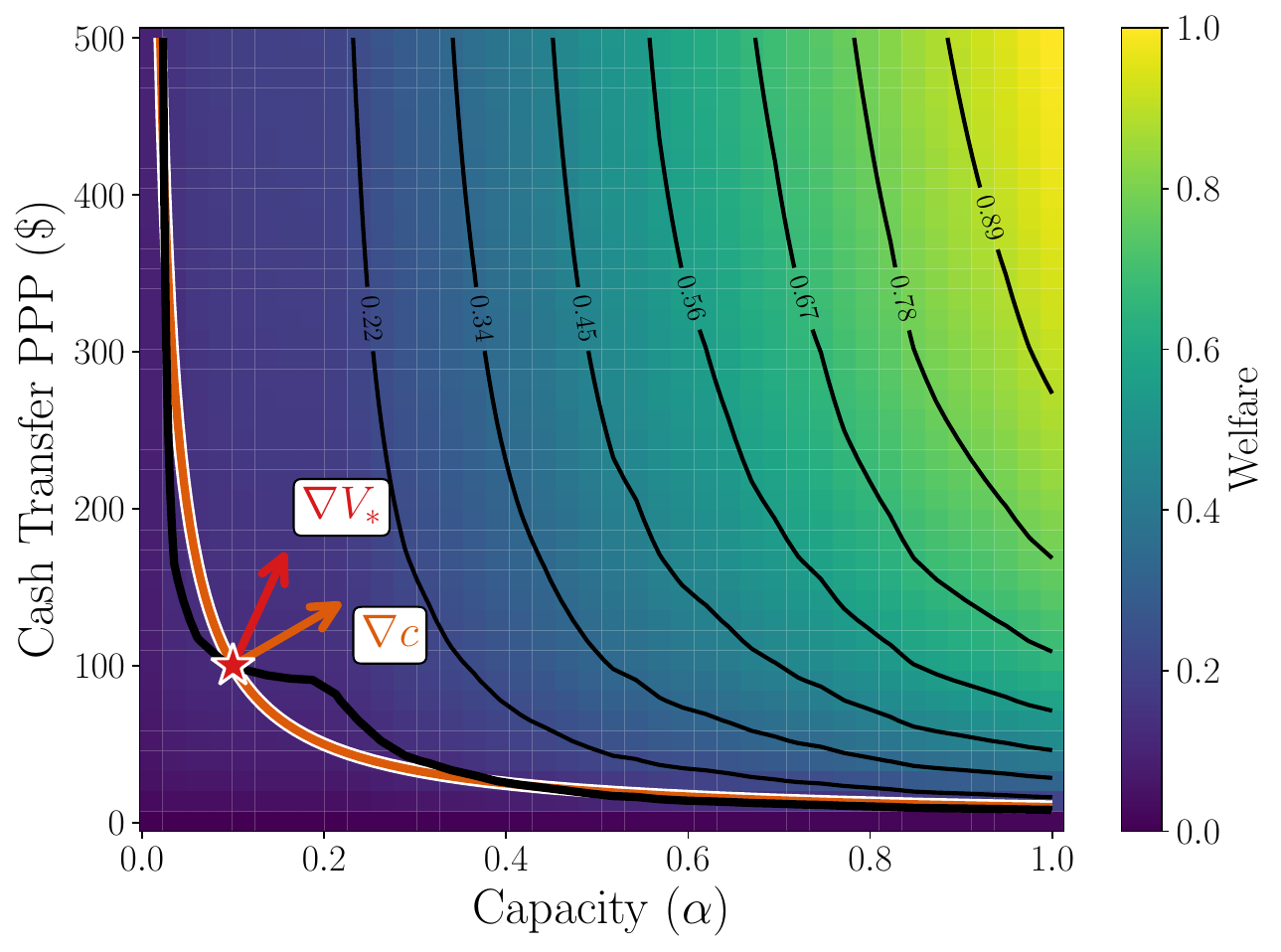}}
\subfigure[Expansion Path \ref{q3} ]
{\includegraphics[width=0.32\textwidth]{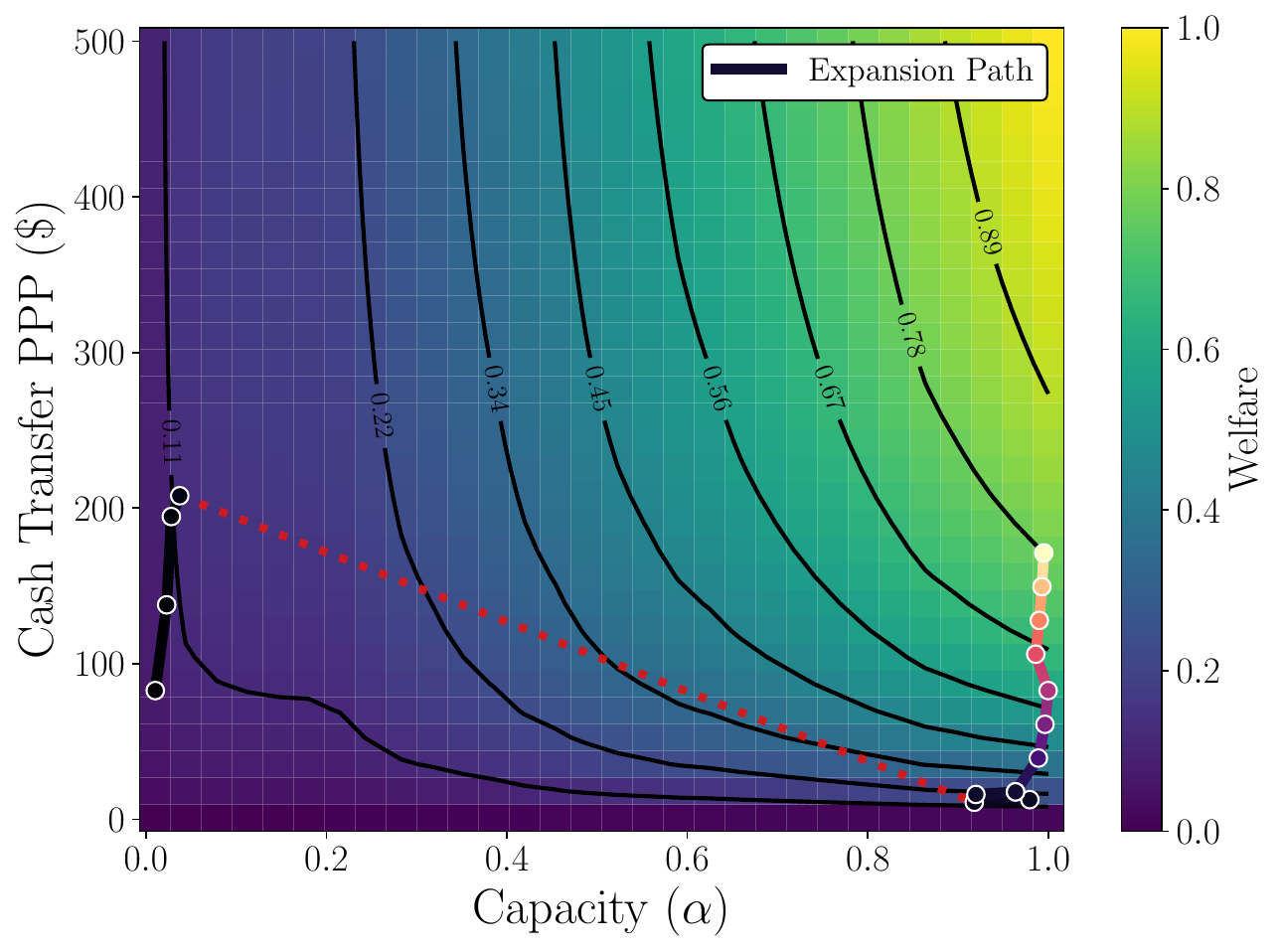}}
\caption{Same as Figure~\ref{fig:benefit}, but households without scorecard data are placed at the end of the allocation queue rather than assigned the population mean. See Section~\ref{app: test-time} for details.}
\end{figure}

\subsection{Replication}
All code for the poverty targeting simulations is available, including cleaning the survey data to prepare it for predictive modeling. The survey data can be downloaded from the LSMS program at \url{https://microdata.worldbank.org/index.php/catalog/2783}. 

\subsection{Compute Resources}
\label{compute}
All experiments (poverty-targeting and employment office case studies) were conducted on a consumer laptop. No GPU resources were used. Individual experimental runs (corresponding to individual figures in the paper) completed in under 10 minutes each, with most finishing in seconds. The total compute required to reproduce all reported results is under one hour on comparable hardware.

\begin{figure}[t]
\centering
\includegraphics[width=0.62\textwidth]{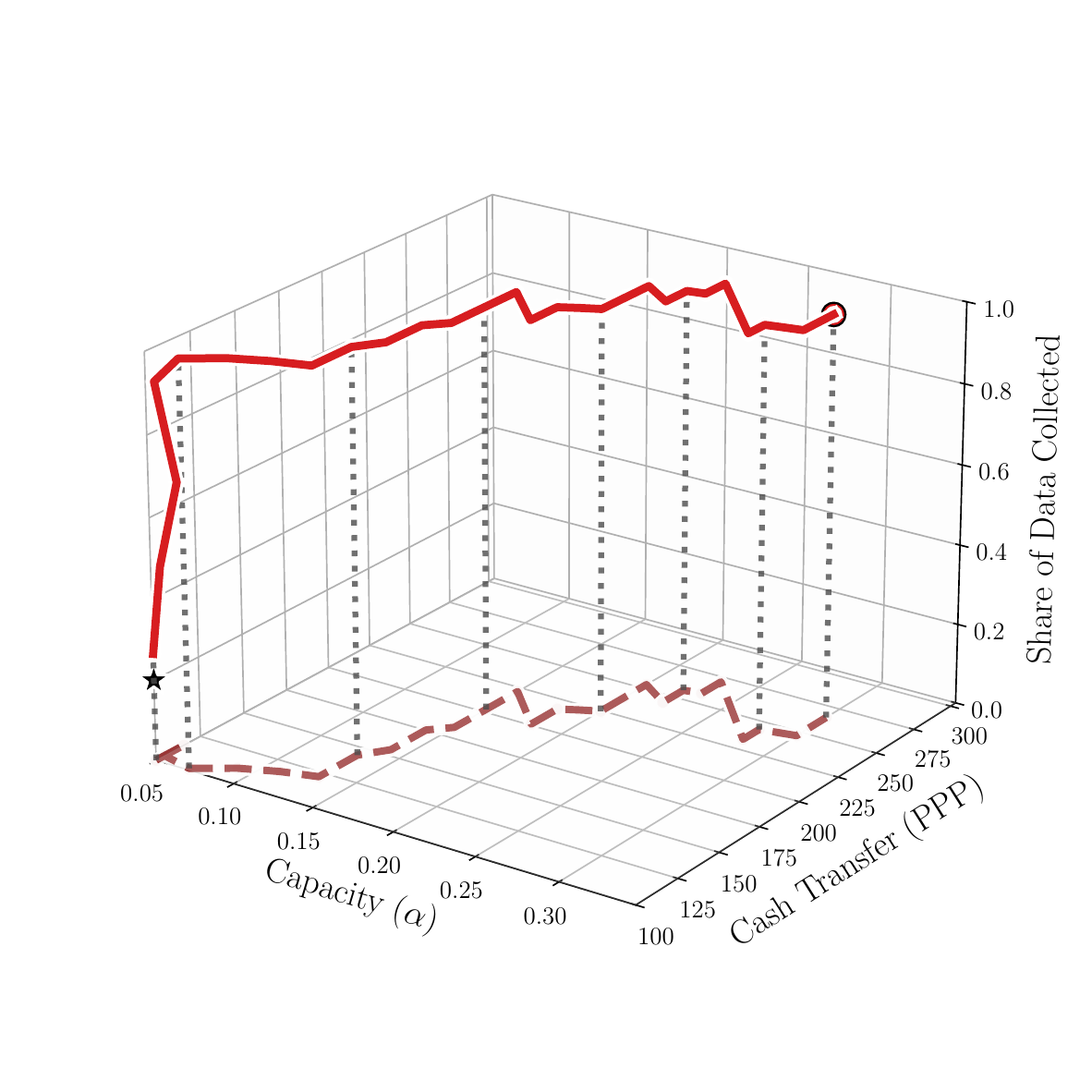}
\caption{Expansion path for optimizing over program capacity, transfer size, and data collection starting at a transfer size of $\$100$ PPP, $5\%$ program capacity and $20\%$ of poverty score cards collected. The red curve traces the welfare-maximizing design as the available budget increases.}
\label{fig:benefit-capacity-label}
\end{figure}

\section{Identifying Long-Term Unemployment in Germany}
\label{sec: app a}

\subsection{Background}

Algorithmic profiling of job seekers in the delivery of support measures are debated and implemented by Public Employment Services (PES) in various countries \citep{loxha2014profiling, kortner2023predictive}. In Germany, the introduction of algorithmic decision-support tools has been discussed since the early 2000s \citep{rudolph2001profiling}. A common narrative in these discussions is the hope to improve efficiency over rule- or case-worker-based profiling workflows, supported by comparative studies that report higher accuracy of algorithmic profiling relative to human case-workers in predicting long-term unemployment risks \citep{desiere2019statistical, junquera2025rules}. Yet, German PES currently rely on case-worker-based profiling, motivating our case study as an illustrative investigation of how different profiling regimes could support employment services in Germany. 

\citet{zezulka2024fair} study the link between LTU risk predictions and targeting decisions with data from the Swiss PES. In the targeting context, \citet{cockx2020priority} and \citet{goller2023} exploit effect heterogeneity of active labour market programs (ALMPs) to find optimal allocations of programs to job seekers. \citet{kortner2023inequality} demonstrate how inequality-averse allocation of ALMPs to job seekers could mitigate historical differences in group-specific unemployment risks. Yet, a common thread in these works is that most parameters of the allocation problem are treated as given, i.e, they focus on modeling solutions rather than asking how outcomes could be improved with an expanded set of actions, such as additional data collection or increased screening capacity.

\subsection{Data}
\label{app: emply data}
We make use of a comprehensive dataset on German jobseekers. The Sample of Integrated Labour Market Biographies (SIAB) is provided via a Scientific Use File from the Research Data Centre (FDZ) of the German Federal Employment Agency (BA) at the Institute for Employment Research (IAB) \citep{SchmuckerVomBerge2023SIABR7521v1}. The SIAB is a 2\% sample of all governmental labor market records in Germany, comprising approximately 60 million entries spanning 1970 to 2021. It collects information on employment status, jobseeker demographics, employment and benefits histories, and participation in active labor market programs, from a range of different data sources across the German employment services (e.g., the Employee History (BeH), Benefit Recipient History (LeH) and Unemployment Benefit II Recipient History (LHG) are separate data products within the BA). Further details on its content, sampling procedure, and anonymization can be found in the accompanying data report \citep{SchmuckerVomBerge2023FDZD2307}.

\begin{table}[htbp]
\centering
\caption{Feature groups}
\label{tab:feature-groups}
\begin{tabularx}{\textwidth}{@{} l X @{}}
\toprule
\textbf{Group} & \textbf{Feature} \\
\midrule
\textit{Socio-demographics \& residence} & Age \\
                 & Gender \\
                 & Educational background \\
                 & State of residence \\
                 & German nationality \\
                 & Commuting status and residential moves  \\
\midrule
\textit{Pre-entry status} & Employment in six weeks before unemployment entry \\
                          & Benefit receipt in six weeks before entry \\
                          & Registered job-seeking in six weeks before entry \\
                          & Registered with public employment office for other reasons in six weeks before entry \\
                          & Subsidized employment in six weeks before entry \\
                          & Missing information for the six weeks before entry \\       
\midrule
\textit{Last job} & Availability of previous-job information \\
                  & Duration of last job \\
                  & Parallel employment during last job \\
                  & Type of last job \\
                  & Part-time status of last job \\
                  & Daily wage in last job \\
                  & Occupational skill level of last job \\
                  & Fixed-term and temporary-agency status of last job \\
                  & Industry of last job \\

\midrule
\textit{Employment history} & Number of previous jobs \\
                            & Number of previous establishments \\
                            & Total and mean duration of previous employment \\
                            & Duration in marginal, full-time, parallel, fixed-term, and temporary-agency employment \\
                            & Duration worked across industries \\
                            & Time since first employment \\
                            & Time since last contact with labor market\\
\midrule
\textit{Agency history} & Previous unemployment and job-search history \\
                        & Benefit receipt history \\
                     & Participation in active labor market programs \\
                     & Subsidized employment history \\
                     & Time since last job search \\
\bottomrule
\end{tabularx}
\end{table}

\subsection{Predictive Modeling} We construct a prediction task for long-term unemployment, predicting the number of months a job seeker will remain unemployed at entry into unemployment. Following \citet{bach2023impact, kern2024, pmlr-v267-fischer-abaigar25a}, we use the same spell aggregation procedure to construct covariates capturing sociodemographic information, labor market and benefits history, and characteristics of the last job held. This leads to 56 numerical and 24 categorical variables; the latter are one-hot encoded. To simulate deployment conditions in real employment offices, we use records from 2010 and 2011 for training, data from 2012 for validation, and data from 2015 for testing. We train a gradient boosting model (CatBoost) on this temporal split. All policy lever comparisons in this paper are conducted on the test set ($n=86,692$).

\subsection{Replication} We provide full replication code for the design space analysis. Due to data sensitivity, the underlying records are not publicly available but can be applied for at the Research Data Centre (FDZ) of the Institute for Employment Research (IAB). 

\section{Societal Impact}
\label{impact}
This work addresses the design of prediction-based allocation systems in public institutions, where algorithmic choices directly affect access to social goods. Our framework helps practitioners understand when prediction improvements are worthwhile relative to other investments, enabling context-specific comparisons that can inform resource allocation decisions. The framework adopts the perspective of a social planner optimizing a specified objective. This carries limitations. First, the analysis requires specifying utility functions and cost structures, excluding considerations that resist quantification, such as procedural fairness. Second, institutional objectives may diverge from the welfare of affected populations; our toolkit can inform institutional decision-making but does not resolve this tension.

\end{document}